\newcommand{\CWP}{\mathsf{CWP}}
\newcommand{\coNP}{\mathsf{coNP}}
\newcommand{\coRP}{\mathsf{coRP}}
\newcommand{\coRNC}{\mathsf{coRNC}}
\newcommand{\diff}{\Delta}
\newcommand{\dom}{\mathop{\mathrm{dom}}}
\newcommand{\graph}{\mathsf{graph}}
\newcommand{\univ}{U}
\newcommand{\NC}{\mathsf{NC}}
\newcommand{\poly}{\mathsf{val}}
\newcommand{\Ptime}{\mathsf{P}}
\newcommand{\rhs}{\mathsf{rhs}}
\newcommand{\RNC}{\mathsf{RNC}}
\newcommand{\size}{\mathsf{size}}
\newcommand{\TC}{\mathsf{TC}}
\newcommand{\val}{\mathrm{val}}
\newcommand{\ms}{|\! |}
\newcommand{\dA}{\mathbb{A}}
\newcommand{\dB}{\mathbb{B}}
\newcommand{\cC}{\mathcal{C}}
\begin{document}

\title{Parallel Identity Testing for Skew Circuits with Big Powers and Applications}

\author{Daniel K\"onig \and Markus Lohrey}

\institute{Universit\"at Siegen, Germany\\
  \texttt{\{koenig,lohrey\}@informatik.uni-leipzig.de} }

\maketitle

\begin{abstract}
Powerful skew arithmetic circuits are introduced. These are 
skew arithmetic circuits with variables, where input gates can be labelled with powers $x^n$ for 
binary encoded numbers $n$. It is shown that polynomial identity testing for powerful skew arithmetic circuits
belongs to $\coRNC^2$, which generalizes a corresponding result for (standard) skew circuits.
Two applications of this result are presented: (i) Equivalence of higher-dimensional straight-line programs
can be tested in $\coRNC^2$; this result is even new in the one-dimensional case, where the straight-line programs
produce strings. (ii) The compressed word problem (or circuit evaluation problem) for certain wreath products 
of finitely generated abelian groups belongs to $\coRNC^2$.
\end{abstract}

\section{Introduction}

\emph{Polynomial identity testing} is the following computational problem:
The input is a circuit, whose internal gates are labelled with either addition or multiplication
and its input gates are labelled with variables ($x_1, x_2, \ldots$) or constants ($-1,0,1$), and
it is asked whether the output gate evaluates to the zero polynomial (in this paper, we always
work in the polynomial ring over the coefficient ring $\mathbb{Z}$ or $\mathbb{Z}_n$ 
for  $n \geq 2$). Based on the Schwartz-Zippel-DeMillo-Lipton Lemma,
Ibarra and Moran \cite{IbMo83} proved that polynomial identity testing over $\mathbb{Z}$ or $\mathbb{Z}_p$ belongs to the class {\sf coRP}
(the complements of problems in randomized polynomial time). Whether there is a deterministic
polynomial time algorithm for polynomial identity testing is an important problem.
In \cite{ImpWig97} it is shown that if there exists a language in $\mathsf{DTIME}(2^{\mathcal{O}(n)})$
that has circuit complexity $2^{\Omega(n)}$, then $\mathsf{P} = \mathsf{BPP}$ (and hence
$\mathsf{P} = \mathsf{RP} = \mathsf{coRP}$). There is also an implication that goes the other way round:
 Kabanets and Impagliazzo \cite{KabanetsI04} proved that if polynomial identity testing
belongs to $\mathsf{P}$, then (i) there is a language in $\mathsf{NEXPTIME}$
 that does not have polynomial size circuits, or (ii)
the permanent is not computable by polynomial size arithmetic circuits.
Both conclusions represent major open problems in complexity theory. Hence, although
it is quite plausible that polynomial identity testing belongs to $\mathsf{P}$ (by \cite{ImpWig97}),
it will be probably very hard to prove (by \cite{KabanetsI04}).

It is known that for algebraic formulas (where the circuit is a tree) and 
more generally, skew circuits (where for every multiplication gate, one of the 
two input gates is a constant or a variable),
polynomial identity testing belongs to $\coRNC$ (but it is still not known to be in $\Ptime$), see \cite[Corollary~2.1]{KabanetsI04}. This holds,
since algebraic formulas and skew circuits can be evaluated in $\NC$ if the variables
are substituted by concrete (binary coded) numbers. Then, as for general polynomial identity testing, 
the Schwartz-Zippel-DeMillo-Lipton Lemma yields a $\coRNC$-algorithm.

In this paper, we identify a larger class of  circuits, for which polynomial identity testing is still
in $\coRNC$; we call these circuits {\em powerful skew circuits}. In such a circuit, we require that for 
every multiplication gate, one of the two input gates is either a constant or a power $x^N$ of a variable $x$,
where the exponent $N$ is given in binary notation. One can replace this power $x^N$ by a subcircuit 
of size $\log N$ using iterated squaring, but the resulting circuit is no longer skew. The main result
of this paper states that polynomial identity testing for powerful skew circuits over the rings $\mathbb{Z}[x]$ and 
$\mathbb{F}_p[x]$ is still in $\coRNC$ (in fact, $\coRNC^2$).
For this, we use an identity testing algorithm of Agrawal and Biswas \cite{AgrawalB03}, which computes the 
output polynomial of the circuit modulo a polynomial $p(x)$ of polynomially bounded degree, which is 
randomly chosen from a certain sample space.
Moreover, in our application, all computations can be done in the ring $\mathbb{F}_p[x]$ for 
a prime number $p$  of polynomial size. This allows us to compute the big powers $x^N$ modulo
$p(x)$ in $\NC^2$ using an algorithm of Fich and Tompa \cite{FiTo85}. It should be noted that the application
of the Agrawal-Biswas algorithm is crucial in our situation. If, instead we would use the 
Schwartz-Zippel-DeMillo-Lipton Lemma, then we would be forced to compute $a^N \mod m$ for 
randomly chosen numbers $a$ and $m$ with polynomially many bits. Whether this problem 
(modular powering) belongs to $\NC$ is a famous open problem \cite[Problem B.5.6]{GrHoRu95}.

We present two applications of our $\coRNC$ identity testing algorithm. The first one concerns
the equivalence problem for straight-line programs. Here, a straight-line program (SLP) is a 
context-free grammar $G$ that computes a single word $\val(G)$. In this context, SLPs are
extensively used in data compression and algorithmics on compressed data, see \cite{Loh12survey}
for an overview. It is known that equivalence for SLPs, i.e., the question whether $\val(G) = \val(H)$
for two given SLPs, can be decided in polynomial time. This result was independently  
discovered by  Hirshfeld, Jerrum, and Moller \cite{HiJeMo96},  Mehlhorn, Sundar, and Uhrig \cite{MehlhornSU97}, and
Plandowski \cite{Pla94}. All known algorithms for the equivalence test are sequential and it is not clear
how to parallelize them. Here, we exhibit an $\NC^2$-reduction from the equivalence problem for SLPs to 
identity testing for skew powerful circuits. Hence, equivalence for SLPs belongs to $\coRNC$.
Moreover, our reduction immediately generalizes to higher dimensional pictures for which SLPs can 
be defined in a fashion similar to the one-dimensional (string) case, using one concatenation operation
in each dimension. For two-dimensional SLPs, Berman et al.~\cite{BeKaLaPlRy02} proved that equivalence
belongs to $\coRP$ using a reduction to PIT. We can improve this result to $\coRNC$. Whether equivalence
of two-dimensional (resp., one-dimensional) SLPs belongs to {\sf P} (resp., $\NC$) is open.

Our second application concerns the compressed word problem for groups. Let $G$ be a finitely 
generated (f.g.) group, and let $\Sigma$ be a finite generating set for $G$. For the compressed word
problem for $G$, briefly $\CWP(G)$, the input is an SLP (as described in the preceding paragraph) over the alphabet
$\Sigma \cup \Sigma^{-1}$, and it is asked whether $\val(G)$ evaluates to the group identity.
The compressed word problem is a succinct version of the classical
word problem (Does a given word over $\Sigma \cup \Sigma^{-1}$ evaluate to the group identity?).
One of the main motivations for the compressed word problem is the fact that the classical word problem
for certain groups (automorphism groups, group extensions) can be reduced to the compressed
word problem for simpler groups \cite[Section~4.2]{Loh14}.
For finite groups (and monoids) the compressed word problem was studied in Beaudry et al.~\cite{BeMcPeTh97},
and for infinite groups the problem was studied for the first time in \cite{Loh06siam}. Subsequently, 
several important classes of f.g. groups with polynomial time compressed word problems were found:  
f.g. nilpotent groups,  f.g. free groups, graph groups (also known as right-angled Artin groups or partially commutative groups), and 
virtually special groups. The latter contain all Coxeter groups, one-relator groups with torsion,
fully residually free groups, and fundamental groups of hyperbolic 3-manifolds; see \cite{Loh14} for details.
For the important class of f.g. linear groups, i.e., f.g. groups of matrices over a field, one can show that the 
compressed word problem reduces to polynomial identity testing  (over $\mathbb{Z}$
or $\mathbb{Z}_p$, depending on the characteristic of the field) and hence belongs to $\mathsf{coRP}$
\cite[Theorem~4.15]{Loh14}.
Vice versa, it was shown that polynomial identity testing over $\mathbb{Z}$ can be reduced
to the compressed word problem for the linear group $\mathsf{SL}_3(\mathbb{Z})$ \cite[Theorem~4.16]{Loh14}. The proof is based on 
a construction of Ben-Or and Cleve  \cite{Ben-OrC92}.  This result indicates that derandomizing the compressed word problem
for a f.g. linear group will be in general very difficult.

In this paper, we consider the compressed word problem for wreath products. If $G$ is a f.g. non-abelian group, then
the compressed word problem for the wreath product $G \wr \mathbb{Z}$ is $\coNP$-hard 
\cite[Theorem~4.21]{Loh14}. On the other hand, we prove that $\CWP(\mathbb{Z} \wr \mathbb{Z})$ is equivalent w.r.t. $\NC^2$-reductions
to identity testing for powerful skew circuits. In particular, $\CWP(\mathbb{Z} \wr \mathbb{Z})$ belongs to $\coRNC$.
The latter result generalizes to any wreath product $G \wr H$, where $H = \mathbb{Z}^n$ for some $n$ and 
$G$  is a finite direct product of copies of $\mathbb{Z}$ and $\mathbb{Z}_p$ for primes $p$.

\section{Background from complexity theory}

Recall that $\mathsf{RP}$ is the set of all problems
$A$ for which there exists a polynomial time bounded randomized Turing machine $R$ such that: (i) if 
$x \in A$ then $R$ accepts $x$ with probability at least $1/2$, and (ii) if $x \not\in A$ then $R$ accepts
$x$ with probability $0$. The class $\mathsf{coRP}$ is the class of all complements of problems from $\mathsf{RP}$.

We use standard definitions concerning circuit complexity, see e.g. \cite{Vol99} for more details. 
In particular we will consider the class $\NC^i$ of all problems
that can be solved by a circuit family $(\mathcal{C}_n)_{n \geq 1}$, where the size of 
$\mathcal{C}_n$ (the circuit for length-$n$ inputs) is polynomially bounded in $n$, its depth is bounded by $O(\log^i n)$,
and $\mathcal{C}_n$ is built from input gates, NOT-gates and  AND-gates and OR-gates of fan-in two.
The class $\NC$ is the union of all classes $\NC^i$.
All circuit families in this paper  will be logspace-uniform, which means  that the 
mapping $a^n \mapsto \mathcal{C}_n$ can be computed in logspace.
A few times, we will mention the class {\sf DLOGTIME}-uniform $\TC^0$, see \cite{HeAlBa02} for details.
Here, it is only important that {\sf DLOGTIME}-uniform $\TC^0$ is contained in $\NC^1$.

To define a randomized version of $\NC^i$, one uses circuit families with additional inputs. 
So, let the $n^{\text{th}}$ circuit $\mathcal{C}_n$ in the family have $n$ normal input gates plus $m$ random input gates,
where $m$ is polynomially bounded in $n$. For an input $x \in \{0,1\}^n$ one defines 
the acceptance probability as
$$
\mathsf{Prob}[\mathcal{C}_n \text{ accepts } x] = \frac{ |\{ y \in \{0,1\}^m \mid \mathcal{C}_n(x,y) = 1 \}|}{2^m}
$$
Here, $\mathcal{C}_n(x,y) = 1$ means that the circuit $\mathcal{C}_n$ evaluates to $1$ if the $i^{\text{th}}$ normal
input gate gets the $i^{\text{th}}$ bit of the input string $x$, and  the $i^{\text{th}}$ random input gate
gets the $i^{\text{th}}$ bit of the random string $y$. Then, the class $\RNC^i$ is the class of all problems
$A$ for which there exists a polynomial size circuit family $(\mathcal{C}_n)_{n \geq 0}$ of depth $O(\log^i n)$ with random input gates
that uses NOT-gates and  AND-gates and OR-gates of fan-in two, such that for all inputs $x \in \{0,1\}^*$ of length $n$:
(i) if $x \in A$, then $\mathsf{Prob}[\mathcal{C}_n \text{ accepts } x] \geq 1/2$, and
(ii) if $x \not\in A$, then $\mathsf{Prob}[\mathcal{C}_n \text{ accepts } x] = 0$.
As usual, $\coRNC^i$ is the class of all complements of problems from $\RNC^i$.
Section B.9 in \cite{GrHoRu95} contains several problems that are known to be 
in $\RNC$, but which are not known to be in $\NC$; the most prominent example is the existence of a perfect matching in a graph.

\section{Polynomials and circuits}

n this paper we deal with polynomial rings $R[x_1, \ldots, x_k]$ in several variables, where 
$R$ is the ring of integers $\mathbb{Z}$ or the ring $\mathbb{Z}_n$ of integers modulo $n \geq 2$.
For computational problems,  we  have to distinguish between two representations of polynomials. Let
$$
p(x_1, \ldots, x_k) = \sum_{i=1}^l a_i x_1^{e_{i,1}} \cdots x_k^{e_{i,k}}
$$
be a multivariate polynomial.
\begin{itemize}
\item The {\em standard representation}  of $p(x)$
is the sequence of tuples $(a_i, e_{i,1}, \ldots, e_{i,k})$, where the coefficient $a_i$ is represented
in binary notation (of course this is only important for the coefficient ring $\mathbb{Z}$) and the exponents 
$e_{i,j}$ are represented in unary notation. Let $|p| =   \sum_{i=1}^n (\lceil \log |a_i| \rceil + e_{i,1} + \cdots + e_{i,k})$.
\item  The {\em succinct representation} of $p(x)$ is the 
sequence of tuples $(a_i, e_{i,1}, \ldots, e_{i,k})$, where both the coefficient $a_i$ 
and the exponents $e_{i,j}$ are represented in binary notation.
Let $\ms p \ms =   \sum_{i=1}^n (\lceil \log |a_i| \rceil + \lceil \log e_{i,1} \rceil + \cdots + \lceil \log e_{i,k} \rceil)$.
\end{itemize}
We use the following result of Eberly \cite{Eberly89} (see also \cite{HeAlBa02}).

\begin{proposition} \label{prop-eberly}
Iterated addition,  iterated multiplication, and division with remainder of polynomials from $\mathbb{Z}[x]$
or $\mathbb{F}_p[x]$ ($p$ is a prime that can be part of the input in binary encoding)
 that are given in standard representation
belong to $\NC^1$ (in fact, $\mathsf{DLOGTIME}$-uniform $\TC^0$).
\end{proposition}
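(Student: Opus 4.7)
The plan is to follow Eberly's approach from [Eberly89], leveraging the crucial structural fact that, because exponents are encoded in unary in the standard representation, every polynomial appearing in the computation has degree polynomially bounded in the input size $|p|$. The main external ingredient is the Hesse--Allender--Barrington theorem, which places iterated addition, iterated multiplication, and division with remainder of integers of polynomial bit length in DLOGTIME-uniform $\TC^0$. Each of the three polynomial operations will be reduced to the integer case, with coefficient arithmetic over $\mathbb{F}_p$ handled by an extra reduction modulo $p$ at the end (itself a $\TC^0$ operation when $p$ is polynomially bounded in bit length).

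For iterated addition $\sum_{i=1}^m p_i$, I would, for each exponent tuple occurring in some input, collect the contributing coefficients and form their iterated integer sum. Grouping terms by their short, unary-encoded exponent tuples is easily carried out in $\TC^0$, and iterated integer addition is already in this class. For iterated multiplication $\prod_{i=1}^m p_i$, I would employ Kronecker substitution: choose bounds $D_1,\dots,D_k$ strictly larger than the degrees in each variable of the output, substitute $x_j \mapsto y^{D_j}$ to reduce the multivariate problem to a univariate one, then substitute $y \mapsto 2^N$ for $N$ larger than the bit length of any output coefficient. This converts iterated polynomial multiplication into iterated integer multiplication on integers of polynomial bit length, and the $\TC^0$ bound follows; individual coefficients of the product are then recovered by bit-slicing, which is $\TC^0$.

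Division with remainder $a(x) = q(x)b(x) + r(x)$ is the subtle step and where I expect the main obstacle. The standard reduction reverses the polynomials via $\tilde{a}(x) = x^{\deg a} a(1/x)$ and $\tilde{b}(x) = x^{\deg b} b(1/x)$, so that $\tilde{q}$ is the truncation of $\tilde{a}(x)\cdot \tilde{b}(x)^{-1}$ modulo $x^{\deg a - \deg b + 1}$ (noting that $\tilde{b}(0)$ equals the leading coefficient of $b$, and so is a unit in $\mathbb{F}_p$ and in $\mathbb{Z}$ when $b$ is monic). Once $q$ is obtained, $r = a - qb$ is recovered by one application of the multiplication routine and a subtraction. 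The naive way to invert the power series $\tilde{b}$ is Newton iteration, but this has logarithmic depth in its multiplications and would only yield an $\NC^2$ bound. To push down to DLOGTIME-uniform $\TC^0$, I would again apply Kronecker substitution to convert the polynomial division into an integer division problem on integers of polynomial bit length, and then invoke the $\TC^0$ algorithm for integer division from Hesse's theorem. The hard part is choosing the Kronecker parameters so that carries from one coefficient slot provably never corrupt another during the division, since intermediate quantities can be large; this is the core technical content of [Eberly89] and is what the proof would have to work out in detail.
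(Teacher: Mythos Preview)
The paper does not give a proof of this proposition at all: it is stated as a known result, with the attribution ``We use the following result of Eberly~\cite{Eberly89} (see also~\cite{HeAlBa02}),'' and no argument is supplied. So there is nothing in the paper to compare your sketch against line by line.

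That said, your sketch is broadly in the spirit of what those references do: reduce polynomial arithmetic to integer arithmetic via Kronecker substitution, and then invoke the Hesse--Allender--Barrington $\TC^0$ bounds for iterated integer addition, iterated integer multiplication, and integer division. A few minor remarks. First, the proposition concerns the univariate rings $\mathbb{Z}[x]$ and $\mathbb{F}_p[x]$, so the multivariate Kronecker step $x_j \mapsto y^{D_j}$ is unnecessary here; a single evaluation at $y = 2^N$ suffices. Second, for division with remainder over $\mathbb{Z}[x]$ the Kronecker-to-integer-division route is delicate precisely because coefficients of $q$ and $r$ can be negative, so the bit-slicing does not literally recover them; Eberly instead computes the power-series inverse of the reversed divisor via a product formula of the form $\prod_{j}(1 + g^{2^j})$ for $g = 1 - \tilde b/\tilde b(0)$, which is an iterated polynomial multiplication and hence already in $\TC^0$ by the previous step. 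Your proposal gestures at this difficulty but resolves it differently from Eberly; either way this is the crux, and the paper simply defers to the cited sources rather than working it out.
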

Consider a commutative semiring $\mathcal{S} = (S,\oplus, \otimes)$. An algebraic circuit (or just circuit) over $\mathcal{S}$
is a triple  $\cC = (V, \rhs, A_0)$, where $V$ is a finite set of
{\em gates} or {\em variables}, $A_0 \in V$ is the output gate,
and $\rhs$ (for {\em right-hand side}) maps every $A \in V$ to an expression (the right-hand side of $A$) 
of one  of the following three forms:
\begin{itemize}
\item a semiring element $s \in S$ (such a gate is an {\em input gate}),
\item $B \oplus C$ with $B,C \in V$ (such a gate is an {\em addition gate}),
\item $B \otimes C$ with $B,C \in V$ (such a gate is a {\em multiplication gate}).
\end{itemize}
Moreover, we require that the directed graph
$$
\graph(\cC) = (V, \{ (A,B) \in V \times V \mid \text{ $B$ occurs in $\rhs(A)$} \})
$$
is acyclic. Every gate $A \in V$ evaluates to an element $\val_{\cal C}(A) \in S$ in the natural
way and we set $\val({\cal C}) = \val_{\cal C}(A_0)$. A circuit over $\mathcal{S}$
 is called skew if for every multiplication gate $A$ one of the two gates  (or both of them) in $\rhs(A)$ is an
 input gate. 
 
 A branching program over $\mathcal{S}$ is a tuple $\mathcal{A} = (V,E, \lambda, s,t)$, 
 where $(V,E)$ is a directed acyclic graph, $\lambda : E \to S$ assigns to each
 edge a semiring element, and $s,t \in V$. Let $\mathcal{P}$ be the set of all paths
 from $s$ to $t$. For a path $p = (v_0, v_1, \ldots, v_n) \in \mathcal{P}$ ($v_0 = s$, $v_n = t$)
 we define $\lambda(p) = \prod_{i=1}^n \lambda(v_{i-1},v_i)$ as  the product (w.r.t. $\otimes$)
 of all edge labels along the path. Finally, the value defined by $\mathcal{A}$
 is 
 $$
 \val(\mathcal{A}) = \sum_{p \in \mathcal{P}} \lambda(p) . 
 $$
 It is well known that skew circuits and branching programs are basically the same objects.

It is well known that
the value defined by a branching program $\mathcal{A}$ can be computed using matrix powers.
W.l.o.g.~assume that   $\mathcal{A} = (\{1,\ldots,n\},E, \lambda, 1,n)$ and consider the 
adjacency matrix $M$ of the edge-labelled graph $(\{1,\ldots,n\},E, \lambda)$, i.e., the $(n \times n)$-matrix 
$M$ with $M[i,j] = \lambda(i,j)$. Then 
$$
\val(\mathcal{A}) = \bigg(\sum_{i=0}^n M^i \bigg)[1,n] .
$$ 
For many semirings $\mathcal{S}$, this simple fact can be used to get an $\NC^2$-algorithm for computing
$\val(\mathcal{A})$. The $n+1$ matrix powers $M^i$ ($0 \leq i \leq n$) can be computed in parallel, and every power can be 
computed by a balanced tree of height $\log i \leq \log n$, where every tree node computes a matrix product.
Hence, we obtain an $\NC^2$-algorithm, if 
\begin{enumerate}[(i)]
\item the number of bits needed to represent a matrix entry in $M^n$ 
is polynomially bounded in $n$ and the number of bits of the entries in $M$, and 
\item the product of two matrices over the semiring $\mathcal{S}$ can be computed in $\NC^1$. 
\end{enumerate}
Point (ii) holds if products of two elements and iterated sums in $\mathcal{S}$
can be computed in $\NC^1$. For the following important semirings these facts are well known (see also Proposition~\ref{prop-eberly}): 
$(\mathbb{Z}[x], +, \cdot)$,  $(\mathbb{Z}_n[x], +, \cdot)$ for $n \geq 2$, $(\mathbb{Z} \cup \{\infty\}, \min, +)$, and
$(\mathbb{Z} \cup \{-\infty\}, \max, +)$.
Here, we assume that polynomials are given in the {\em standard representation}. For the polynomial
ring $\mathbb{Z}[x]$ also note that every entry $p(x)$ of the matrix power $M^n$ is a polynomial of degree $n \cdot m$,
where $m$ is the maximal degree of a polynomial in $M$, and all coefficients  are bounded by 
$(n \cdot m \cdot a)^n$ (and hence need
at most $n\cdot (\log n + \log m  + \log n)$ bits), where $a$ is the maximal absolute value of a coefficient in $M$. Hence point (i) above holds.
The following lemma sums up the above discussion.

\begin{lemma} \label{lemma-evaluate-skew}
The output value of a given skew circuit (or branching program) over one of the following semirings can be computed in $\NC^2$:
\begin{enumerate}[(i)]
\item $(\mathbb{Z}[x], +, \cdot)$ and  $(\mathbb{Z}_n[x], +, \cdot)$ for $n \geq 2$ (polynomials are given in the 
standard representation, and $n$ can be part of the input in binary representation)
\item $(\mathbb{Z} \cup \{\infty\}, \min, +)$ and $(\mathbb{Z} \cup \{-\infty\}, \max, +)$ (integers are given in binary representation)
\end{enumerate} 
\end{lemma}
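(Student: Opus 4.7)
The plan is to follow the matrix-power strategy sketched just before the statement. First, I would convert the skew circuit to an equivalent branching program $\mathcal{A} = (\{1,\ldots,n\},E,\lambda,1,n)$ of polynomial size by the standard topological-sort construction, and form its labeled adjacency matrix $M \in \mathcal{S}^{n \times n}$ (using $0$ or $\infty$ for missing edges, according to the semiring). Since $\val(\mathcal{A}) = \bigl(\sum_{i=0}^{n} M^i\bigr)[1,n]$, it is enough to compute $M^0, M^1, \ldots, M^n$ in parallel and output the $\oplus$-sum of their $(1,n)$-entries.

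To reach $\NC^2$ depth, I would compute each $M^i$ by a balanced binary tree of $\lceil \log i \rceil \le \log n$ levels of matrix multiplications, so the whole computation is an $O(\log n)$-deep composition of $n \times n$ matrix products over $\mathcal{S}$. Therefore it suffices to place one matrix product over $\mathcal{S}$ in $\NC^1$, which in turn reduces to showing that binary $\otimes$ and iterated $\oplus$ on $n$ operands of the sizes arising in $M^n$ lie in $\NC^1$.

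Both the $\NC^1$ operation bound and the polynomial bit-size bound on the entries of $M^n$ need to be checked for each of the four semirings. For $(\mathbb{Z} \cup \{\pm\infty\}, \min/\max, +)$, an entry of $M^i$ is either $\pm\infty$ or a sum of at most $n$ entries of $M$, so its bit length is linear in $n$ and in the bit length of $M$'s entries; iterated integer addition and $\min/\max$ on polynomially many integers lie in $\mathsf{DLOGTIME}$-uniform $\TC^0 \subseteq \NC^1$. For $\mathbb{Z}[x]$ and $\mathbb{Z}_n[x]$ in standard representation, Proposition~\ref{prop-eberly} immediately gives addition, iterated addition, and multiplication of polynomials in $\NC^1$, which yields the $\NC^1$ matrix product. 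The main (mild) obstacle is the size bound for $\mathbb{Z}[x]$, which I would handle by a short induction on $i$: if entries of $M$ have degree at most $m$ and coefficients of absolute value at most $a$, then entries of $M^i$ have degree at most $i \cdot m$ and coefficients of absolute value at most $(i \cdot m \cdot a)^i$, so at $i = n$ the standard representation of each entry is still of polynomial size. In $\mathbb{Z}_n[x]$ the coefficients are a priori bounded by $n$ and the same degree bound applies. Combining the per-semiring $\NC^1$ matrix product with the $O(\log n)$-deep balanced multiplication tree and the parallel computation of the $n+1$ powers yields the claimed $\NC^2$ algorithm in all four cases.
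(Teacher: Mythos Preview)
Your proposal is correct and follows essentially the same approach as the paper: the paper's argument, given in the discussion immediately preceding the lemma, is precisely the matrix-power strategy with balanced multiplication trees, checking the two conditions (polynomial bit-size of entries in $M^n$ and $\NC^1$ matrix multiplication via $\NC^1$ binary $\otimes$ and iterated $\oplus$) for each of the four semirings, with the same $(n \cdot m \cdot a)^n$ coefficient bound for $\mathbb{Z}[x]$. One minor remark: Proposition~\ref{prop-eberly} is stated only for $\mathbb{F}_p[x]$, not general $\mathbb{Z}_n[x]$, but this is easily bridged by computing over $\mathbb{Z}[x]$ and reducing coefficients modulo $n$ at the end; the paper glosses over this in the same way.
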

Point (i) of Lemma~\ref{lemma-evaluate-skew} also holds for the polynomial rings $(\mathbb{Z}[x_1, \ldots, x_k], +, \cdot)$ and $(\mathbb{Z}_n[x_1, \ldots, x_k], +, \cdot)$
as long as the number $k$ of variables is not part of the input: The polynomial $p(x_1, \ldots, x_k) = \prod_{i=1}^k (x_i+1)$ can be defined
by a branching program with $O(k)$ edges labeled by the polynomials $x_i+1$, but the product of these polynomials has $2^k$ monomials.
Also note that it is important that we use the standard representation for polynomials in (i):
The polynomial $p(x) = \prod_{i=1}^n (x^{2^i} +1)$ can be represented by a  branching program with $O(n)$ edges labeled by the polynomials $x^{2^i}+1$
but $p(x)$ has $2^n$ monomials.

In this paper, we will mainly deal with circuits over a polynomial ring $R[x_1, \ldots, x_k]$, where the ring
$R$ is either $(\mathbb{Z}, +, \cdot)$ or $(\mathbb{Z}_n, +, \cdot)$. Let $R$ be one of these rings.
By definition, in such a circuit every input gate is labelled
with a polynomial from $R[x_1, \ldots, x_k]$. Usually, one considers circuits where the right-hand side
of an input gate is a polynomial given in standard representation (or, equivalently, a constant $a \in R$ or variable $x_i$);
we will also use the term ``standard circuits'' in this case. 
For succinctness reasons, we will also consider circuits over $R[x_1, \ldots, x_k]$, where 
the right-hand sides of input gates are polynomials given in succinct representation. For general circuits this makes no real difference
(since a big power $x^N$ can be defined by a subcircuit of size $O(\log N)$ using iterated squaring),
but for  skew circuits we will gain additional succinctness. We will use the term ``powerful skew circuits''.
Formally,  a {\em powerful skew circuit} over the polynomial ring $R[x_1, \ldots, x_k]$ is a skew circuit over
the ring $R[x_1, \ldots, x_k]$
as defined above, where the right-hand
side of every input gate is a polynomial that is given in succinct representation (equivalently, we could 
require that the right-hand side is a constant $a \in R$ or a power $x_i^N$ with $N$ given in binary notation).
We define the size of a powerful skew circuit $\cC$ as follows: First, define the size 
$\size_{\cC}(A)$ of a  gate $A \in V$ as follows: If 
$A$ is an addition gate or a multiplication gate, then $\size_{\cC}(A)=1$, and if 
$A$ is an input gate with  $\rhs(A) = p(x_1, \ldots, x_k)$, then 
$\size_{\cC}(A)= \ms p(x_1, \ldots, x_k) \ms$.
Finally, we define the size of $\cC$ as 
$\sum_{A \in V} \size_{\cC}(A)$. 

A {\em powerful branching program} is an algebraic branching program  $(V,E, \lambda, s,t)$ over a polynomial ring
$R[x_1, \ldots, x_k]$, where every edge label $\lambda(e)$ ($e \in E$) is a polynomial that is given in succinct representation.
The size of a powerful branching program is $\sum_{e \in E} \ms \lambda(e) \ms$.
From a given powerful skew circuit one can compute in logspace an equivalent powerful branching program
and vice versa.

Note that 
the transformation of a
powerful skew circuit over $R[x_1, \ldots, x_k]$  into an equivalent standard skew circuit
(where every input gate is labelled by a polynomial given in standard representation)
requires an exponential blow-up. For instance, the smallest standard skew circuit for the polynomial $x^N$ 
has size $N$, whereas $x^N$ can be trivially obtained by a powerful skew circuit of size $\lceil \log N \rceil$.

A central computational problem in computational algebra is {\em polynomial identity testing}, briefly PIT.
Let $R$ be a ring that is effective in the sense that elements of $R$ can be encoded by natural numbers
in such a way that addition and multiplication in $R$ become computable operations. 
Then, PIT for the ring $R$ is the following problem:

\medskip
\noindent
{\em Input:} A number $k \geq 1$ and a circuit $\cC$ over the ring $R[x_1, \ldots, x_k]$.

\noindent
{\em Question:} Is $\poly(\cC)$ the zero-polynomial?

\medskip
\noindent
For the rings $\mathbb{Z}$ and $\mathbb{Z}_p$ ($p$ prime) the following result
was shown in \cite{IbMo83}; for $\mathbb{Z}_n$ with $n$ composite, it was shown in 
\cite{AgrawalB03}.

\begin{theorem} \label{thm-ibarra}
For each of the rings $\mathbb{Z}$ and $\mathbb{Z}_n$ ($n \geq 2$), 
PIT belongs to the class $\coRP$.
\end{theorem}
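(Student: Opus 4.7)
The plan is to build the $\coRP$ algorithm on the Schwartz-Zippel-DeMillo-Lipton lemma: if $p \in D[x_1,\ldots,x_k]$ is a nonzero polynomial of total degree $d$ over an integral domain $D$ and $S \subseteq D$ is finite, then a uniformly random point $a \in S^k$ satisfies $p(a) = 0$ with probability at most $d/|S|$. Given a circuit $\cC$ of size $s$ computing $p = \poly(\cC)$, the total degree of $p$ is bounded by $2^s$ (each multiplication gate at most doubles the degree), so I would draw the evaluation point uniformly from $S^k$ with $|S|$ of size, say, $2^{s+2}$, which bounds the error below $1/4$ when $p \neq 0$.

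For $R = \mathbb{Z}$, the obstacle is that $p(a)$ may be doubly exponential in $s$, so evaluating $\cC$ directly over the integers is not polynomial time. The standard fix is to perform all arithmetic modulo a random prime $q$ chosen from an interval $[2,M]$. Two facts must be arranged: (i) a uniformly random prime in that interval can be sampled in randomized polynomial time by rejection sampling combined with a polynomial-time primality test, and (ii) the bad event ``$p(a) \neq 0$ but $q \mid p(a)$'' has small probability. For (ii) a standard estimate bounds $|p(a)|$ by $2^{s \cdot 2^s}$, so $p(a)$ has at most $s \cdot 2^s$ distinct prime divisors; by the prime number theorem the interval contains $\Theta(M/\log M)$ primes, and taking $M$ of $O(s)$ bits drives the bad probability below $1/4$. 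A union bound with the Schwartz-Zippel error yields overall one-sided error below $1/2$, and a gate-by-gate evaluation modulo $q$ uses only arithmetic on $O(s)$-bit integers and runs in polynomial time.

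For $R = \mathbb{Z}_p$ with $p$ prime, the same argument applies directly, since all arithmetic is already performed in the field $\mathbb{F}_p$ and no intermediate blow-up occurs; if $p$ itself is too small relative to $2^{s+2}$ one replaces $\mathbb{F}_p$ by a suitable extension field $\mathbb{F}_{p^\ell}$ of polynomially bounded degree and draws $S$ from there. For $R = \mathbb{Z}_n$ with $n$ composite, Schwartz-Zippel fails directly because $\mathbb{Z}_n$ has zero divisors. Here I would invoke the Agrawal-Biswas technique, which factors $n$ into prime powers via the Chinese remainder theorem, lifts each factor into an extension of the appropriate prime field by reducing modulo a small random irreducible polynomial, and then applies a Schwartz-Zippel-style estimate inside the resulting field.

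The main obstacle I anticipate is the composite modulus case: the $\mathbb{Z}$ and $\mathbb{Z}_p$ cases are essentially a textbook combination of Schwartz-Zippel with random modular reduction, but for $\mathbb{Z}_n$ one must argue that the Agrawal-Biswas sample space of low-degree irreducible polynomials over $\mathbb{F}_p$ is large enough, and that a sufficiently small fraction of them annihilate a fixed nonzero circuit output. Careful bookkeeping is also required to control the bit lengths of all intermediate values so that the whole procedure stays in polynomial time.
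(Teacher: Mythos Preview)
The paper does not actually prove this theorem; it simply cites it, attributing the cases $\mathbb{Z}$ and $\mathbb{Z}_p$ ($p$ prime) to Ibarra and Moran and the composite case $\mathbb{Z}_n$ to Agrawal and Biswas. Your sketch of the Schwartz--Zippel--DeMillo--Lipton approach combined with random modular reduction for $\mathbb{Z}$, and direct evaluation over $\mathbb{F}_p$ (or an extension) for prime $p$, is exactly the Ibarra--Moran argument, and your appeal to Agrawal--Biswas for the composite case matches the paper's attribution.

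One inaccuracy worth flagging: your summary of the Agrawal--Biswas technique is not quite right. They do \emph{not} factor $n$ into prime powers; their ``Chinese remaindering'' is over \emph{polynomials}, not over the integer modulus. After reducing to a univariate polynomial, they compute it modulo a low-degree polynomial $T(x)$ drawn from a carefully designed random family (this is the same family the present paper uses later in the proof of Theorem~\ref{thm-PIT-coRNC}), and the analysis goes through over $\mathbb{Z}_n$ without ever knowing the factorization of $n$. In the theorem as stated $n$ is a fixed ring parameter, so factoring it would be a harmless constant-time preprocessing step and your CRT-then-prime-power route could also be made to work with some care over $\mathbb{Z}_{p^e}$; but that is not the Agrawal--Biswas argument.
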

Note that the number $k$ of variables is part of the input in PIT.
On the other hand, there is a well-known reduction from PIT to 
PIT restricted to univariate polynomials (polynomials with a single variable) \cite{AgrawalB03}.
For a multivariate polynomial $p(x_1,\ldots, x_k) \in R[x_1, \ldots, x_k]$ let 
$\deg_i(p)$ be the degree of $p$ in the variable $x_i$. It is the largest number $d$ such
that $x_i^d$ appears in a monomial of $p$.
Let $p(x_1,\ldots,x_k)$ be a polynomial and let 
$d=1+\max\{ \deg_i(p) \mid 1 \leq i \leq k \}$.
We define the univariate polynomial $\univ(p)$ as
$$
\univ(p) = p(y^1, y^{d}, \ldots, y^{d^{k-1}}) .
$$
Hence, the polynomial $\univ(p)$ is obtained from $p(x_1,\ldots,x_k)$ by replacing every
monomial $a \cdot x_1^{n_1} \cdots x_1^{n_k}$ by $a \cdot y^N$, where 
$N = n_1 + n_2 d + \cdots n_k d^{k-1}$ is the number with base-$d$ representation
$(n_1, n_2, \ldots, n_k)$. The polynomial $p$ is the zero-polynomial if and only if 
$\univ(p)$ is the zero-polynomial.

The following lemma can be also shown for arbitrary circuits, but we will only need it for powerful skew circuits.
\begin{lemma} \label{prop-make-univariate}
Given a powerful skew circuit $\cC$ for the polynomial $p(x_1,\ldots,x_k)$, the following can be be computed in $\NC^2$:
\begin{enumerate}[(i)]
\item  The binary encoding of $d=1+\max\{ \deg_i(p) \mid 1 \leq i \leq k \}$ and
\item a powerful skew circuit $\cC'$ for $\univ(p)$ .
\end{enumerate}
\end{lemma}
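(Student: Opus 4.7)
The plan is to compute $d$ by a tropical semiring evaluation on the underlying branching program, and then to build $\cC'$ by relabelling only the input gates while keeping the circuit structure intact. First, rewrite $\cC$ as an equivalent powerful branching program $\mathcal{A}=(V,E,\lambda,s,t)$ in logspace, using the equivalence noted after the definition of powerful branching programs.

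For (i), fix a variable $x_i$ and define an auxiliary branching program $\mathcal{A}_i$ over the tropical semiring $(\mathbb{Z}\cup\{-\infty\},\max,+)$ on the same DAG by setting $\lambda_i(e):=\deg_i(\lambda(e))$. Since each $\lambda(e)$ is given in succinct representation as a list of monomials with binary-encoded exponents, $\deg_i(\lambda(e))$ is just the maximum of the $x_i$-exponents appearing in those monomials, which is computable in $\mathsf{DLOGTIME}$-uniform $\TC^0\subseteq\NC^1$. By Lemma~\ref{lemma-evaluate-skew}(ii), the integer $D_i:=\val(\mathcal{A}_i)$ can be computed in $\NC^2$; doing this in parallel for all $i$ stays within $\NC^2$. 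Because $\deg_i(fg)\leq\deg_i(f)+\deg_i(g)$ and $\deg_i(f+g)\leq\max(\deg_i(f),\deg_i(g))$, a simple induction along the paths of $\mathcal{A}$ shows $D_i\geq\deg_i(p)$, and a bit-length bound on $D_i$ in terms of $\size(\cC)$ is immediate from the $\max/+$ recursion. Set $d:=1+\max_iD_i$; its binary encoding is then polynomially sized in $\size(\cC)$ and satisfies $d>\deg_i(p)$ for every $i$.

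For (ii), keep the underlying DAG and the types of all internal gates of $\cC$, and relabel each input gate independently and in parallel. If an input gate has succinct right-hand side $\sum_j a_j x_1^{e_{j,1}}\cdots x_k^{e_{j,k}}$, replace it by the succinct polynomial $\sum_j a_j y^{N_j}$, where
\[
N_j \;=\; \sum_{i=1}^k e_{j,i}\, d^{i-1}.
\]
Each $N_j$ is an iterated sum of iterated products of binary numbers and is therefore computable in $\NC^1$, and its bit-length is at most $O(k\log d+\max_i\log e_{j,i})$, hence polynomial in $\size(\cC)$. Performing this replacement simultaneously at every input gate yields a powerful skew circuit $\cC'$ with $\val(\cC')=p(y^1,y^d,\ldots,y^{d^{k-1}})=\univ(p)$ by construction. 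The overall cost is dominated by step (i) and remains in $\NC^2$.

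The main subtlety is in step (i): leading-coefficient cancellations in additions (and, over $\mathbb{Z}_n$ with $n$ composite, zero divisors in multiplications) mean that the tropical evaluation in general returns only an upper bound $D_i\geq\deg_i(p)$, not the exact degree, and it is not obvious how to compute the exact degree in $\NC^2$. This is harmless: the role of $d$ in $\univ$ is solely to encode $k$-tuples $(n_1,\ldots,n_k)$ of exponents injectively into single exponents of $y$, and any $d$ strictly greater than every $\deg_i(p)$ works for this purpose. Thus the formula $d=1+\max_i\deg_i(p)$ in the statement may be read in the slightly weaker sense ``any polynomially-sized $d$ strictly exceeding every $\deg_i(p)$'', which is exactly what the tropical upper bound delivers.
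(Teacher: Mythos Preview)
Your proposal is correct and follows essentially the same approach as the paper: compute degree information via a max-plus (tropical) evaluation of the skew structure, then relabel input gates by substituting $x_i\mapsto y^{d^{i-1}}$. Two minor remarks: the detour through a branching program is unnecessary since Lemma~\ref{lemma-evaluate-skew} applies directly to skew circuits; and your explicit discussion of why only an \emph{upper bound} on $\deg_i(p)$ is obtained (and why that suffices for $\univ$) is a point the paper's proof glosses over, so you are in fact being more careful here than the original.
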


\begin{proof}
Let $\mathcal{C}$ be a powerful skew circuit for the  polynomial $p(x_1,\ldots,x_k)$.
In order to compute $\deg_i(p)$, we construct 
a circuit over the max-plus semiring 
as follows: Take the circuit $\cC$. If $A$ is an input gate
that is labelled with the polynomial $a(x_1, \ldots, x_k)$, then relabel
$A$ with the binary coded number $\deg_i(a)$. Moreover, for a gate $A$ with
$\rhs(A) = B + C$ (resp., $\rhs(A) = B \times C$) 
we set $\rhs(A) = \max(B, C)$ (resp., $\rhs(A) = B + C$). 
The resulting circuit is clearly skew. Therefore it can be evaluated in  $\NC^2$ by
Lemma~\ref{lemma-evaluate-skew}.

Once the number $d=1+\max\{ \deg_i(p) \mid 1 \leq i \leq k \}$ is computed we simply replace every monomial 
$a \cdot x_1^{n_1} \cdots x_k^{n_k}$ in the circuit $\cC$ by the monomial $a \cdot y^N$, where 
$N = n_1 + n_2 d + \cdots n_k d^{k-1}$. The binary encoding of $N$ can be computed
from the binary encodings of  $n_1, \ldots, n_k$ even in {\sf DLOGTIME}-uniform $\TC^0$.
\qed
\end{proof}
Note that the above reduction from multivariate to univariate circuits does
not work for standard skew circuits: the output circuit will be powerful skew even
if the input circuit is standard skew. For instance, the polynomial
$\prod_{i=1}^k x_i$ (which can be produced by a standard skew circuit of size $k$)
is transformed into the polynomial $y^{2^k-1}$, for which the smallest 
standard skew circuit has size $\Omega(2^k)$.

\section{PIT for powerful skew circuits}

The main result of this paper is:

\begin{theorem} \label{thm-PIT-coRNC}
For each of the rings $\mathbb{Z}$ and $\mathbb{F}_p$ ($p$ is a prime that can be part of the input in unary encoding), 
PIT for powerful skew circuits belongs to the class $\coRNC^2$.
\end{theorem}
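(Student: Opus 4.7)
The plan is to combine four ingredients: (a) the multivariate-to-univariate reduction of Lemma~\ref{prop-make-univariate}; (b) passage from $\mathbb{Z}$ to a random prime field $\mathbb{F}_p$ of polynomially bounded size (only in the integer case); (c) the Agrawal--Biswas identity test \cite{AgrawalB03}, which replaces the resulting univariate polynomial by its residue modulo a random $q(y) \in \mathbb{F}_p[y]$ of polynomially bounded degree; and (d) the Fich--Tompa $\NC^2$ algorithm for modular polynomial exponentiation \cite{FiTo85}, which lets us replace each big power $y^N$ appearing at an input gate by $y^N \bmod q(y)$.

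First I would apply Lemma~\ref{prop-make-univariate} to obtain, in $\NC^2$, a powerful skew circuit $\cC$ for a univariate polynomial $\poly(\cC) \in R[y]$. For $R = \mathbb{Z}$, I would sample a random prime $p$ of polynomial bit length and reduce modulo $p$: since $\poly(\cC)$ equals the sum, over $s$-$t$ paths in the associated powerful branching program, of products of coefficients of succinctly-represented polynomials, its integer coefficients have polynomial bit length, so a nonzero $\poly(\cC)$ remains nonzero modulo a random such prime with constant probability. From this point on, everything takes place in $\mathbb{F}_p[y]$. Next, using the Agrawal--Biswas sample space, I would pick in $\RNC$ a random polynomial $q(y) \in \mathbb{F}_p[y]$ of polynomially bounded degree such that $\poly(\cC) \bmod q(y) \neq 0$ holds with constant probability whenever $\poly(\cC) \neq 0$.

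It then suffices to compute $\poly(\cC) \bmod q(y)$ in $\NC^2$ and test it for zero. I would do this by transforming $\cC$, in parallel, into a standard skew circuit $\cC'$ over $\mathbb{F}_p[y]$: constant input gates are unchanged, and an input gate carrying a power $y^N$ (with $N$ given in binary) is replaced by the standard representation of $y^N \bmod q(y)$, which Fich--Tompa computes in $\NC^2$. The circuit $\cC'$ is then a standard skew circuit over $\mathbb{F}_p[y]$ whose inputs have polynomially bounded degree, so Lemma~\ref{lemma-evaluate-skew}(i) evaluates it in $\NC^2$; checking whether the output is the zero polynomial is trivial. The main obstacle is clause (d): a more naive Schwartz--Zippel approach would instead plug random integers into $\cC$ and require computing $a^N \bmod m$ for $a,m$ of polynomial bit length, which is not known to lie in $\NC$ \cite[Problem~B.5.6]{GrHoRu95}; reducing to $\mathbb{F}_p[y]/(q(y))$ is essential precisely because modular exponentiation of a polynomial is in $\NC^2$ even when the exponent is encoded in binary.
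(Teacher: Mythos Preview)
Your plan is essentially the paper's proof: reduce to the univariate case, pass to $\mathbb{F}_p$, run Agrawal--Biswas to pick a modulus $q(y)$ of small degree, use Fich--Tompa to reduce each big power $y^N$ modulo $q(y)$, and then evaluate the resulting standard skew circuit via Lemma~\ref{lemma-evaluate-skew}.

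Two small points are worth correcting. First, in step~(b) the paper does not sample a random prime: it deterministically takes the first $k$ primes $p_1,\dots,p_k$ (with $k$ one more than the bit-length bound on the integer coefficients of $\poly(\cC)$), runs the $\mathbb{F}_{p_i}$ test for each $i$ in parallel, and accepts iff all runs accept; the Chinese remainder theorem guarantees correctness, and this keeps all the randomness inside the Agrawal--Biswas step. Your randomized variant also works, but watch the phrase ``polynomial bit length'': you need $p$ of polynomially bounded \emph{value} (i.e., unary size), both because polynomially bounded primes already suffice for the coefficient-killing argument and because the Fich--Tompa theorem, as quoted in the paper, requires $p$ in unary. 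Second, after evaluating $\cC'$ you still need one final division with remainder by $q(y)$ before the zero test, since $\poly(\cC')$ is only congruent to $\poly(\cC)$ modulo $q(y)$ and may have larger degree; this last step is $\NC^1$ by Proposition~\ref{prop-eberly}.
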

The proof of Theorem~\ref{thm-PIT-coRNC} has two main ingredients: The randomized
identity testing algorithm of Agrawal and Biswas \cite{AgrawalB03} and the modular polynomial powering
algorithm of Fitch and Tompa \cite{FiTo85}. Let us start with the identity testing algorithm of  Agrawal and Biswas.
We will only need the version for the polynomial ring  $\mathbb{F}_p[x]$, where $p$ is a prime number.

Consider a polynomial $P(x) \in \mathbb{F}_p[x]$ of degree $d$.
The algorithm of Agrawal and  Biswas consists
of the following steps (later we will apply this algorithm to the polynomial defined by a powerful
skew circuit), where $0 < \epsilon < 1$ is an error parameter:
\begin{enumerate}
\item Let $\ell$ be a number with $\ell \geq \log d$ and $t = \max \{\ell, \frac{1}{\epsilon}\}$
\item Find the smallest prime number $r$ such that $r \neq p$ and $r$ does not divide
any of $p-1$, $p^2-1, \ldots, p^{\ell-1}-1$. It is argued in \cite{AgrawalB03} that $r \in O(\ell^2 \log p)$.
\item Randomly choose a tuple $b = (b_0, \ldots, b_{\ell-1}) \in \{0,1\}^\ell$ and compute the 
polynomial $T_{r,b,t}(x) = Q_r(A_{b,t}(x))$, where $Q_r(x) = \sum_{i=0}^{r-1} x^i$
is the $r^{\text{th}}$ cyclotomic polynomial and $A_{b,t}(x) = x^t + \sum_{i=0}^{\ell-1} b_i \cdot x^i$.
\item Accept, if $P(x) \text{ mod } T_{r,b,t}(x) = 0$, otherwise reject.
\end{enumerate}
Clearly, if $P(x) = 0$, then the above algorithm accepts with probability $1$. 
For a non-zero polynomial $P(x)$, Agrawal and Biswas proved:

\begin{theorem}[\cite{AgrawalB03}]
Let $P(x) \in \mathbb{F}_p[x]$ be a non-zero polynomial of degree $d$. 
The above algorithm rejects $P(x)$ with probability at least $\varepsilon$.
\end{theorem}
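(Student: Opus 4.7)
The plan is to bound, for a fixed non-zero $P(x) \in \mathbb{F}_p[x]$ of degree $d$, the number of ``bad'' random strings $b \in \{0,1\}^\ell$ for which $T_{r,b,t}(x)$ divides $P(x)$, and to show that this set occupies at most a $(1-\varepsilon)$-fraction of $\{0,1\}^\ell$. Equivalently, I would establish that $T_{r,b,t}(x)$ has at least one irreducible factor over $\mathbb{F}_p$ that does not divide $P(x)$ for at least an $\varepsilon$-fraction of the $b$'s.

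The first step exploits the arithmetic choice of $r$. Since $r$ is prime and divides none of $p^i-1$ for $1 \leq i < \ell$, the multiplicative order of $p$ modulo $r$ is at least $\ell$. Consequently $Q_r(y)$ factors over $\mathbb{F}_p$ as a product of distinct irreducible polynomials $g_1, \ldots, g_s$ of common degree $m \geq \ell \geq \log d$, and therefore $T_{r,b,t}(x) = \prod_{i=1}^s g_i(A_{b,t}(x))$. I would then observe that any root $\xi \in \overline{\mathbb{F}_p}$ of $g_i(A_{b,t}(x))$ satisfies $A_{b,t}(\xi) = \alpha$ for some root $\alpha$ of $g_i$, so $\mathbb{F}_p(\alpha) \subseteq \mathbb{F}_p(\xi)$ and $[\mathbb{F}_p(\xi) : \mathbb{F}_p] \geq m$. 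Hence every irreducible factor of $T_{r,b,t}(x)$ over $\mathbb{F}_p$ has degree at least $m$.

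Next I would carry out the counting. If $T_{r,b,t}(x) \mid P(x)$, then every irreducible factor of $T_{r,b,t}(x)$ divides $P(x)$, so it must belong to the set $\mathcal{F}$ of (at most $d/m$) monic irreducible factors of $P$ of degree $\geq m$. For each candidate $h \in \mathcal{F}$ and each root $\xi$ of $h$ in $\overline{\mathbb{F}_p}$, the requirement ``$h \mid T_{r,b,t}$'' forces $A_{b,t}(\xi) = \xi^t + \sum_{i=0}^{\ell-1} b_i \xi^i$ to be a root of $Q_r$---an $\mathbb{F}_p$-affine condition on $b = (b_0, \ldots, b_{\ell-1})$. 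Combining these affine constraints across all $h \in \mathcal{F}$ and all $g_i$, together with the parameter choice $t \geq 1/\varepsilon$, a union-bound estimate should yield the claimed density.

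The hard part is this last density estimate: quantifying precisely how the irreducible factorisation of $T_{r,b,t}(x)$ varies with $b$, so that at least an $\varepsilon$-fraction of random strings force a ``new'' irreducible factor outside $\mathcal{F}$ to appear. This is the technical core of the argument of \cite{AgrawalB03}, and I would follow their approach, exploiting the affine dependence of $A_{b,t}$ on $b$, the lower bound $m \geq \log d$ on factor degrees, and the size bound $|\mathcal{F}| \leq d/m$ to conclude.
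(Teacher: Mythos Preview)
This theorem is not proved in the paper: it is quoted from Agrawal and Biswas \cite{AgrawalB03} and used as a black box in the proof of Theorem~\ref{thm-PIT-coRNC}. There is thus no argument in the present paper against which to compare your attempt.

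That said, your outline correctly reconstructs the setup of the Agrawal--Biswas proof. The arithmetic condition on $r$ forces $\mathrm{ord}_r(p)\geq\ell$, so every irreducible factor of $T_{r,b,t}$ over $\mathbb{F}_p$ has degree at least $\ell\geq\log d$, and a non-zero $P$ of degree $d$ has at most $d/\ell$ such factors. You are also right that the remaining density estimate is the substantive step, and you explicitly defer to \cite{AgrawalB03} for it. One caution about the heuristic you sketch: the plain union bound over $\mathcal{F}$ (each $h\in\mathcal{F}$ admits at most $r-1$ bad strings $b$, by injectivity of $b\mapsto A_{b,t}(\xi)$ on $\{0,1\}^\ell$) gives at most $(d/\ell)(r-1)$ bad strings, hence a bad fraction bounded only by $(r-1)/\ell$, which with $r\in O(\ell^2\log p)$ is $O(\ell\log p)$ and not small; note also that the parameter $t\geq 1/\varepsilon$ does not even enter this estimate. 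The actual argument in \cite{AgrawalB03} uses a sharper coprimality statement among the $T_{r,b,t}$ as $b$ varies over $\{0,1\}^\ell$ to push the bad fraction down to the required bound. So your plan is consistent with the source, but be aware that the affine/union-bound idea as written does not by itself close the gap you correctly flag.
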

The second result we are using was shown by Fich and Tompa:
\begin{theorem}[\cite{FiTo85}]
The following computation can be done in $\NC^2$:   

\smallskip
\noindent
Input:  A unary encoded prime number $p$, polynomials $a(x), q(x) \in \mathbb{F}_p[x]$ such that
$\deg(a(x)) < \deg(q(x)) = d$, and a binary encoded number $N$.

\smallskip
\noindent
Output:  The polynomial $a(x)^N \text{ mod } q(x)$.
\end{theorem}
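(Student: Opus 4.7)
The plan is to reduce modular polynomial powering to iterated squaring of a $d \times d$ matrix over $\mathbb{F}_p$. Since $\mathbb{F}_p[x]/(q(x))$ is a $d$-dimensional $\mathbb{F}_p$-vector space with basis $1, x, \ldots, x^{d-1}$, multiplication by $a(x) \bmod q(x)$ is an $\mathbb{F}_p$-linear map represented by a $d \times d$ matrix $M_a$ whose $j$-th column contains the coefficient vector of $a(x) \cdot x^j \bmod q(x)$. With $M_a$ in hand, the coefficient vector of $a(x)^N \bmod q(x)$ equals $M_a^N e_0$, where $e_0 = (1, 0, \ldots, 0)^T$ represents the constant polynomial $1$. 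Computing matrix powers over a fixed finite field avoids any coefficient blow-up, so all intermediate objects remain of polynomial size.

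First I would build $M_a$: each of its $d$ columns is one polynomial multiplication followed by one division with remainder, and by Proposition~\ref{prop-eberly} each lies in $\NC^1$ (in fact $\TC^0$); the columns are computed in parallel. Next I would perform iterated squaring, obtaining $M_a^{2^0}, M_a^{2^1}, \ldots, M_a^{2^{\lfloor \log N \rfloor}}$ via $M_a^{2^{i+1}} = (M_a^{2^i})^2$. The chain is inherently sequential across squarings, but each single squaring is a $d \times d$ matrix product over $\mathbb{F}_p$, which reduces to $d^2$ inner products of length $d$ and lies in $\NC^1$ (elements of $\mathbb{F}_p$ have $O(\log p)$ bits, and iterated addition and multiplication over $\mathbb{F}_p$ live in $\TC^0$ since $p$ is given in unary). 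Concatenating $O(\log N)$ such $\NC^1$ subcircuits yields depth $O(\log N \cdot \log n) = O(\log^2 n)$. Finally, writing $N = \sum_{i \in S} 2^i$, I would combine the selected powers into $M_a^N = \prod_{i \in S} M_a^{2^i}$ by a balanced binary tree of $O(\log \log N)$ levels of matrix multiplications, apply the product to $e_0$, and read off the coefficients of $a(x)^N \bmod q(x)$.

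The main obstacle is keeping the overall depth within $O(\log^2 n)$ in the face of $\log N$ sequential matrix squarings: the whole argument hinges on each squaring fitting in $\NC^1$. This in turn relies essentially on $p$ being presented in unary, so that arithmetic modulo $p$ (including iterated sums over $d$ field elements) fits into $\TC^0 \subseteq \NC^1$. With $p$ given in binary, even a single modular product is not known to be in $\NC^1$, and the approach collapses to the famous modular powering obstacle discussed in the introduction. Assuming the uniform $\NC^1$ implementation of polynomial arithmetic from Proposition~\ref{prop-eberly}, the construction above assembles into a logspace-uniform $\NC^2$ circuit family computing $a(x)^N \bmod q(x)$.
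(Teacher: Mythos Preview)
There is a genuine gap in your depth analysis. The claim ``$O(\log N \cdot \log n) = O(\log^2 n)$'' is false: since $N$ is given in \emph{binary}, the quantity $\log N$ is the bit-length of $N$ and can be as large as the input size $n$ itself, not $O(\log n)$. Your chain of repeated squarings $M_a^{2^{i+1}} = (M_a^{2^i})^2$ is, as you say, inherently sequential in $i$, so stacking $\log N$ many $\NC^1$ layers yields depth $\Theta((\log N)\cdot \log n)$, which in the worst case is $\Theta(n\log n)$ --- not $\NC^2$, indeed not $\NC$ at all. The step you flagged as the ``main obstacle'' is exactly where the argument breaks.

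The missing idea (and the point of the Fich--Tompa construction referenced in the paper) is to exploit that the Frobenius $\phi\colon r\mapsto r^{p}$ is $\mathbb{F}_p$-\emph{linear} on $\mathbb{F}_p[x]/(q(x))$. Write $N$ in base $p$ as $N=\sum_{i=0}^{k-1} c_i p^{i}$ with $0\le c_i<p$ and $k\le \log N\le n$. Setting $b_{k}=1$ and $b_j = a^{c_j}\cdot \phi(b_{j+1})$ one gets $b_0=a^{N}\bmod q$. Each map $r\mapsto a^{c_j}\cdot\phi(r)$ is $\mathbb{F}_p$-linear, hence represented by a $d\times d$ matrix $A_j$ over $\mathbb{F}_p$; crucially, the $A_j$ can all be computed \emph{independently in parallel} in $\NC^1$ (here $c_j<p$ and $p$ is given in unary, so $a^{c_j}\bmod q$ is an iterated product of at most $p$ polynomials, covered by Proposition~\ref{prop-eberly}). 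One is then left with the iterated product $A_0A_1\cdots A_{k-1}$ of polynomially many $d\times d$ matrices over $\mathbb{F}_p$, which a balanced binary tree of depth $O(\log k)\le O(\log n)$ evaluates using $\NC^1$ matrix multiplications at each node --- giving $\NC^2$ overall. This is precisely the iterated matrix product highlighted in the Remark following the theorem. Your reduction to the companion matrix $M_a$ is fine; what you need to replace is the sequential squaring by a product of independently computable factors.
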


\begin{remark}
In \cite{FiTo85}, it is stated that the problem can be solved using circuits of depth
$(\log n)^2 \log \log n$ for the more general case that the underlying field is $\mathbb{F}_{p^\ell}$,
where $p$ and $\ell$ are given in unary representation. The main bottleneck is the computation
of an iterated matrix product $A_1 A_2 \cdots A_m$ (for $m$ polynomial in the input length) of $(d \times d)$-matrices
over the field $\mathbb{F}_{p^\ell}$. In our situation (where the field is $\mathbb{F}_p$)
we easily obtain an $\NC^2$-algorithm for this step: Two $(d \times d)$-matrices over 
$\mathbb{F}_p$ can be multiplied in 
$\NC^1$ (actually in $\mathsf{DLOGTIME}$-uniform $\TC^0$). Then we compute the product
$A_1 A_2 \cdots A_m$ by a balanced binary tree of depth $\log m$.
Also logspace-uniformity of the circuits is not stated explicitly in \cite{FiTo85}, but follows easily since only standard arithmetical operations
on binary coded numbers are used.
\end{remark}
{\em Proof of Theorem~\ref{thm-PIT-coRNC}.}
By Lemma~\ref{prop-make-univariate} we can restrict to univariate polynomials.
We first prove the theorem for the case of a powerful skew circuit
$\mathcal{C}$ over the field $\mathbb{F}_p$, where the prime number $p$ is part of the input
but specified in unary notation. 

Let $p$ be a unary encoded prime number and
$\mathcal{A} = (\{1, \ldots, n\}, 1, n, \lambda)$ be a
powerful branching program with $n$ nodes that is equivalent to $\mathcal{C}$. 
Let $P(x) = \poly(\mathcal{A}) \in \mathbb{F}_p[x]$.  Fix an error probability
$0 < \varepsilon < 1$. Our randomized $\NC^2$-algorithm is based on the 
identity testing algorithm of Agrawal and Biswas. It accepts with probability 1 
if $\poly(\mathcal{A})=0$ and accepts with probability at most $\epsilon$ if  $P(x) \neq 0$.
Let us go through the four steps of the Agrawal-Biswas algorithm to see that they can
be implemented in $\NC^2$.

\medskip
\noindent
{\em Step 1.} An upper bound on the degree of $P(x)$ can be computed in $\NC^2$ as in the proof
of Lemma~\ref{prop-make-univariate}.
For the number $\ell$ we can take the number of bits of this degree bound, which
is polynomial in the input size. Let $t = \max \{\ell, \frac{1}{\epsilon}\}$.

\medskip
\noindent
{\em Step 2.} 
For the prime number $r$ we know that $r \in O(\ell^2 \log p)$, which is a polynomial bound.
Hence, we can test in parallel all possible candidates for $r$. For a certain candidate $r$, we check
in parallel whether it is prime (recall that $r$ is of polynomial size) and whether it divides any of the numbers
$p-1$, $p^2-1, \ldots, p^{\ell-1}-1$. The whole computation is possible in $\NC^1$.

\medskip
\noindent
{\em Step 3.} 
Let $b = (b_0, \ldots, b_{\ell-1}) \in \{0,1\}^\ell$ be the chosen tuple. We have to compute the polynomial
$T_{r,b,t}(x) = Q_r(A_{b,t}(x))$, where $Q_r(x) = \sum_{i=0}^{r-1} x^i$ and $A_{b,t} = x^t + \sum_{i=0}^{\ell-1} b_i \cdot x^i$.
This is an instance of iterated multiplication (for the powers $A_{b,t}(x)^i$) and iterated addition of polynomials.
Hence, by Proposition~\ref{prop-eberly} also this step can be carried out in $\NC^1$.
Note that the degree of $T_{r,b,t}(x)$ is $t \cdot (r-1)$, i.e., polynomial in the input size.

\medskip
\noindent
{\em Step 4.} 
For the last step, we have to compute $P(x)$ mod $T_{r,b,t}(x)$. 
For this, we consider in parallel all monomials $a \cdot  x^N$  that occur in an edge label  of our powerful algebraic 
branching program $\mathcal{A}$. Recall that $a \in \mathbb{F}_p$ and $N$ is given in binary notation.
Using the Fich-Tompa algorithm we compute $x^N$ mod $T_{r,b,t}(x)$ (with $a(x) = x$) in $\NC^2$.
We then replace the edge label $a \cdot x^N$ by $a \cdot (x^N \text{ mod } T_{r,b,t}(x))$.
Let $\mathcal{B}$ the resulting algebraic branching program. Every polynomial that appears as an edge
label in $\mathcal{B}$ is now given in standard form. Hence, by Lemma~\ref{lemma-evaluate-skew} we 
can compute in $\NC^2$ the output polynomial $\poly(\mathcal{B})$. Clearly,
$P(x)$ mod $T_{r,b,t}(x) = \poly(\mathcal{B})$ mod $T_{r,b,t}(x)$. The latter polynomial can be computed
in $\NC^1$ by Proposition~\ref{prop-eberly}.

Let us now prove Theorem~\ref{thm-PIT-coRNC} for the ring $\mathbb{Z}$. 
Let $\mathcal{A} = (\{1, \ldots, n\}, 1, n, \lambda)$ be a powerful algebraic
branching program over $\mathbb{Z}$ with $n$ nodes and let $P(x) = \val(\mathcal{A})$.
Let us  first look at the coefficients of $P(x)$. 
Let $m$ be the maximum absolute value $|a|$, where $a \cdot  x^N$
is an edge label of $\mathcal{A}$. Since there are at most $2^n$ many paths from $s$ to $t$ in
$\mathcal{A}$, every coefficient of the polynomial $P(x)$ belongs to the interval
$[-(2m)^n,  (2m)^n]$. Let $k = n \cdot (\lceil \log(m) \rceil +1) +1$ and $p_1,\dots ,p_k$ be the first $k$ prime numbers.
Each prime $p_i$ is polynomially bounded in $k$ (and hence the input size) and the list of primes can be computed
in $\NC^1$ by doing in parallel all necessary divisibility checks on unary encoded numbers.

The Chinese remainder theorem implies that $P(x) = 0$ if and only if $P(x) \equiv 0$ mod $p_i$
for all $1 \leq i \leq k$. We can carry out the latter tests in parallel using the above algorithm 
for a unary encoded prime number. The overall algorithm accepts if we accept for every prime $p_i$.
If $P(x) = 0$, then we will accept for every  $1 \leq i \leq k$ with probability 1, hence the overall
algorithm accepts with probability 1. On the other hand, if $P(x) \neq 0$, then there exists  a prime $p_i$ ($1 \leq i \leq k$) such that the algorithm rejects
with probability at least $1-\varepsilon$. Hence, the overall algorithm will reject 
with probability at least $1-\varepsilon$ as well.
\qed

\section{Multi-dimensional straight-line programs} \label{sec-SLP}

Let $\Gamma$ be a finite alphabet. For $l \in \mathbb{N}$ 
let $[l] = \{1,\ldots, l\}$.
An {\em $n$-dimensional picture} over $\Gamma$ is a mapping $p:  \prod_{j=1}^n [l_j] \to \Gamma$ for some $l_j \in \mathbb{N}$.
Let $\dom(p) = \prod_{j=1}^n [l_j]$.
For $1 \leq j \leq n$ we define $|p|_j = l_j$ as the length of $p$ in the $j^{\text{th}}$ dimension. 
Note that one-dimensional pictures are simply finite words.
Let $\Gamma^*_n$ denote the set of $n$-dimensional pictures over $\Gamma$.
On this set we can define partially defined concatenation operations $\circ_i$ ($1 \leq i \leq n$)
as follows: For pictures $p, q \in \Gamma^*_n$, the picture $p \circ_i q$ is defined if and only
if $|p|_j = |q|_j$ for all $1 \leq j \leq n$ with $i \neq j$. In this case, we have
$|p \circ_i q|_j = |p|_j \,  (= |q|_j)$ for $j \neq i$ and $|p \circ_i q|_i = |p|_i +  |q|_i$.
Let $l_j = |p \circ_i q|_j$. For a tuple $(k_1, \ldots, k_n) \in \prod_{j=1}^n [l_j]$
we finally set $(p \circ_i q)(k_1, \ldots, k_n) = p(k_1, \ldots, k_n)$ if 
$k_i \leq  |p|_i$ and $(p \circ_i q)(k_1, \ldots, k_n) = q(k_1, \ldots,  k_{i-1}, k_i-|p|_i, k_{i+1}, \ldots, k_n)$
if $k_i >  |p|_i$. These operations generalize the concatenation of finite 
words.

An \emph{$n$-dimensional straight-line program (SLP)} over the terminal
alphabet $\Gamma$ is a triple $\dA = (V,\rhs,S)$, where $V$ is a finite set of variables,
$S \in V$ is the start variable, and $\rhs$ maps each variable $A$ to its right-hand
side $\rhs(A)$, which is either a terminal symbol $a \in \Gamma$ or an expression of 
the form $B \circ_i C$, where $B, C \in V$ and $1 \leq i \leq n$ such that the following additional 
conditions are satisfied:
\begin{itemize}
\item The relation $\{ (A,B) \in V \times V \mid A \text{ occurs in } \rhs(A) \}$ is acyclic.
\item  One can assign
to each $A \in V$ and $1 \leq i \leq n$ a number $|A|_i$ with the following properties: If $\rhs(A) \in \Gamma$ then 
$|A|_i = 1$ for all $i$. If $\rhs(A) = B \circ_i C$ then $|A|_i = |B|_i + |C|_i$ and
$|A|_j  = |B|_j = |C_j|$ for all $j \neq i$.
\end{itemize}
These conditions ensure that every variable $A$ evaluates to 
a unique $n$-dimensional picture $\val_\dA(A)$ such that
$|\val_\dA(A)|_i = |A|_i$ for all $1 \leq i \leq n$. Finally, $\val(\dA) = \val_{\dA}(S)$
is the picture defined by $\dA$. We omit the index $\dA$ if
the underlying SLP is clear from the context. We define 
the size of the SLP $\dA= (V,\Gamma,S,P)$ as $|\dA|=|V|$.

A one-dimensional SLP is a context-free grammar that generates a single word.
Two-dimensional SLPs were studied in \cite{BeKaLaPlRy02}. 

A simple induction shows that for every $n$-dimensional SLP $\dA$ of size $m$ and every $1 \leq i \leq n$ one has
$|\val(\dA)|_i \leq O(3^{m/3})$ \cite[proof of Lemma~1]{CLLLPPSS05}. On the other hand, it is straightforward to define an SLP
$\dB$ of size $m$ such that $|\val(\dB)|_i = 2^m$ for all dimensions $i$.
Hence, an SLP can be seen as a compressed representation of the picture it generates,
and exponential compression rates can be achieved in this way.

\subsection{Equality testing for compressed strings and n-dimensional pictures}

Given two $n$-dimensional SLPs we want to know whether they evaluate to the same picture.
In \cite{BeKaLaPlRy02} it was shown that this problem belongs to $\coRP$ by translating it to polynomial
identity testing. For a given $n$-dimensional picture $p: \dom(p) \to \{0,1\}$ we define the polynomial 
\[f_p(x_1,...,x_n)=\sum_{(e_1,...,e_n) \in \dom(p)} p(e_1,...,e_n) \prod_{i=1}^n x_i^{e_i} .\]
We consider $f_p$ as a polynomial from $\mathbb{Z}_2[x_1, \ldots, x_n]$. For two $n$-dimensional  pictures
  $p$ and $q$ such that $|p|_i = |q|_i$ for all $1 \leq i \leq n$ we clearly have
 $p = q$ if and only if $f_p + f_q = 0$ (recall that coefficients are from $\mathbb{Z}_2$).
 In \cite{BeKaLaPlRy02}, it was observed that from an SLP $\dA$ for a picture $P$, one can easily
 construct an arithmetic circuit for the polynomial $f_p$, which leads to a $\coRP$-algorithm
 for equality testing. 
 Since the circuit for $f_p$ is  actually powerful skew, we get:

\begin{theorem}
The question whether two $n$-dimensional SLPs $\mathcal{A}$ and $\mathcal{B}$ evaluate to the same $n$-dimensional picture is in $\coRNC^2$
(here, $n$ is part of the input).
\end{theorem}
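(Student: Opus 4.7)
The plan is to reduce picture equality for two $n$-dimensional SLPs $\dA$ and $\dB$ to polynomial identity testing for a powerful skew circuit and then apply Theorem~\ref{thm-PIT-coRNC}. For simplicity I restrict to $\Gamma = \{0,1\}$; a general alphabet is handled by running, in parallel, one test per symbol $a \in \Gamma$ on the $\{0,1\}$-valued indicator pictures $[P(\cdot)=a]$ and $[Q(\cdot)=a]$, and $|\Gamma|$ is polynomial in the input size.

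The first subroutine computes shapes. By Lemma~\ref{lemma-evaluate-skew} applied to the (standard) skew circuit encoding the recurrences $|A|_i = 1$ and $|A|_i = |B|_i + |C|_i$ (with $|A|_i = |B|_i$ when the concatenation axis is not $i$), I can obtain the binary representations of all numbers $|A|_i$ in $\NC^2$. The algorithm rejects immediately if $|\val(\dA)|_i \neq |\val(\dB)|_i$ for some $i$ or if any inner consistency condition fails; otherwise both SLPs generate pictures of the same shape and I proceed.

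The heart of the construction is the recursive identity
\[
f_{P \circ_i Q}(x_1,\ldots,x_n) \;=\; f_P(x_1,\ldots,x_n) + x_i^{|P|_i} \cdot f_Q(x_1,\ldots,x_n),
\]
which translates each SLP variable $A$ into a circuit gate $g_A$ evaluating to $f_{\val_{\dA}(A)}$. A terminal rule $\rhs(A)=a$ becomes an input gate with constant $a$; a concatenation rule $\rhs(A) = B \circ_i C$ becomes a fresh input gate holding the monomial $x_i^{|B|_i}$ in succinct (binary-exponent) form, a multiplication gate combining it with $g_C$, and an addition gate with $g_B$. Running this transformation on both $\dA$ and $\dB$ and taking a final difference yields a circuit $\cC$ whose output polynomial over $\mathbb{F}_2$ vanishes precisely when $\val(\dA)=\val(\dB)$.

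Since every multiplication gate has an input-gate factor, $\cC$ is skew; since the exponents $|B|_i \leq 3^{m/3}$ are stored in binary, $\cC$ is powerful skew of polynomial size, and can be assembled in $\NC^2$ once the lengths are known. The main point, and the reason this approach yields $\coRNC^2$ rather than only $\coRP$ as in~\cite{BeKaLaPlRy02}, is that powerful skew circuits accommodate these exponentially large monomials without any further blow-up; expanding $x_i^{|B|_i}$ via iterated squaring would destroy skewness and block Theorem~\ref{thm-PIT-coRNC}. With that theorem as a black box, testing $\val(\cC)=0$ over $\mathbb{F}_2$ is in $\coRNC^2$, which completes the algorithm.
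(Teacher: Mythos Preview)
Your proposal is correct and follows essentially the same approach as the paper: compute all side-lengths in $\NC^2$, reject on a shape mismatch, translate each concatenation rule $A = B \circ_i C$ into $g_A = g_B + x_i^{|B|_i}\cdot g_C$ to obtain a powerful skew circuit over $\mathbb{F}_2$, take the sum (difference) of the two start gates, and invoke Theorem~\ref{thm-PIT-coRNC}. The only cosmetic divergence is the treatment of a general alphabet~$\Gamma$: the paper encodes $a_i \in \Gamma$ by the binary block $0^i 1^{|\Gamma|-i}$ and runs a single test, whereas you run $|\Gamma|$ parallel tests on indicator pictures; both reductions are trivially in $\NC$.
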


\begin{proof}
Let $\mathbb{A}_1=(V_1,\rhs_1,S_1)$ and 
$\mathbb{A}_2=(V_2,\rhs_2,S_2)$ be $n$-dimensional SLPs over the alphabet
$\Gamma$. We can assume that $V_1 \cap V_2 = \emptyset$ and 
$\Gamma = \{0,1\}$ (if $\Gamma = \{a_1, \ldots, a_k\}$ then we encode 
$a_i$ by $0^i 1^{k-i}$).  

First we calculate $|A|_i$ for every $1 \leq k \leq n$ and every $A \in V_1 \cup V_2$ in $\NC^2$ by evaluating additive circuits
over $\mathbb{N}$, see Lemma~\ref{lemma-evaluate-skew}. If $|S_1|_i \neq |S_2|_i$ for some $1 \leq k \leq n$, then
we have $\val(\mathbb{A}_1)\neq\val(\mathbb{A}_2)$. Otherwise,
we construct the circuit 
$$\mathcal{C} = (V_1 \cup V_2 \cup \{S\}, \rhs, S)$$ over $\mathbb{Z}_2[x_1, \ldots, x_n]$
with:
\begin{gather*}
\rhs_{\mathcal{C}}(A)= B+ x_k^{|B|_k} \cdot C \text{ if }  \rhs_1(A)=B \circ_k C \text{ or } \rhs_2(A)=B \circ_k C, \\
  \rhs_{\mathcal{C}}(A)= a \text{ if } \rhs_1(A)=a \in \{0,1\}   \text{ or } \rhs_2(A)= a \in \{0,1\}, \text{ and } \\
\rhs(S)=S_1 + S_2
\end{gather*}
Then $\val(\mathcal{C})=f_{\val(\mathbb{A}_1)}+f_{\val(\mathbb{A}_2)}$ and so $\val(\mathcal{C})=0$ if and only if 
$\val(\mathbb{A}_1)=\val(\mathbb{A}_2)$. 
Obviously $\mathcal{C}$ becomes a powerful skew circuit after splitting right-hand sides of the form
$B + x^N \cdot C$. Hence, Theorem~\ref{thm-PIT-coRNC} allows to check in 
$\coRNC^2$ whether $\val(\mathcal{C})=0$.
\qed
\end{proof}
It should be noted that even in the one-dimensional case (where equality testing for SLPs can be 
done in polynomial time \cite{HiJeMo96,MehlhornSU97,Pla94}), no randomized $\NC$-algorithm was known before.

\section{Circuits over wreath products}

As a second application of identity testing for powerful skew circuits we will consider the circuit evaluation
problem (also known as the compressed word problem) for wreath products of finitely generated abelian groups.
The wreath product is an important operation in group theory. The next subsection briefly recalls the definition
and some well-known results. We assume some basic familiarity with group theory.

\subsection{Wreath products}

Let $G$ and $H$ be groups. The restricted wreath product
$H \wr G$ is defined as follows:
\begin{itemize}
\item 
Elements of $H \wr G$ are pairs $(f,g)$, where $g \in G$ and
$f : G \to H$ is a mapping such that $f(a) \neq 1_H$ for only finitely
many $a \in G$ ($1_H$ is the identity element of $H$).
\item 
The multiplication in $H \wr G$ is defined as follows:
Let $(f_1,g_1), (f_2,g_2) \in H \wr G$. Then 
$(f_1,g_1)(f_2,g_2) = (f, g_1g_2)$, where
$f(a) = f_1(a)f_2(g_1^{-1}a)$.
\end{itemize}
For readers, who have not seen this definition before, the following
intuition might be helpful: An element $(f,g) \in H \wr G$ can be thought
as a finite collection of elements of $H$ that are sitting in certain
elements of $G$ (the mapping $f$) together with a distinguished
element  of $G$ (the element $g$), which can be thought as a cursor
moving around $G$. 
If we want to compute the product $(f_1,g_1) (f_2,g_2)$, we do this 
as follows: First, we shift the finite collection of $H$-elements that
corresponds to the mapping $f_2$ by $g_1$: If the element $h \in H\setminus\{1_H\}$ is 
sitting in $a \in G$ (i.e., $f_2(a)=h$), then we remove $h$ from $a$ and
put it to the new location $g_1a \in G$. This new collection
corresponds to the mapping $f'_2 : a \mapsto f_2(g_1^{-1}a)$.
After this shift, we multiply the two collections of $H$-elements
pointwise: If in $a \in G$ the elements $h_1$ and $h_2$ are sitting
(i.e., $f_1(a)=h_1$ and $f'_2(a)=h_2)$, then we put the product
$h_1h_2$ into the $G$-location $a$. Finally, the new distinguished
$G$-element (the new cursor position) becomes $g_1g_2$. 

The following lemma seems to be folklore.

\begin{lemma} \label{prop-wreath-direct-product}
The group $(A \times B) \wr G$ embeds into 
$(A \wr G) \times (B \wr G)$.
\end{lemma}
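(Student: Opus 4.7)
The plan is to write down an explicit map $\phi \colon (A \times B) \wr G \to (A \wr G) \times (B \wr G)$, then check it is an injective homomorphism. The natural candidate comes from the fact that a finitely supported function $f \colon G \to A \times B$ is essentially a pair of finitely supported functions $f_A \colon G \to A$ and $f_B \colon G \to B$, namely the two coordinate projections: $f(a) = (f_A(a), f_B(a))$. So I would define
\[
\phi(f, g) = \bigl((f_A, g),\, (f_B, g)\bigr) \in (A \wr G) \times (B \wr G).
\]

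First I would verify that $\phi$ is well-defined: since $f$ has finite support in $A \times B$ (where the identity is $(1_A, 1_B)$), both $f_A$ and $f_B$ have finite support in $A$ and $B$, respectively, so $(f_A, g)$ and $(f_B, g)$ are legitimate elements of the respective wreath products. Injectivity is immediate, because $f$ can be reconstructed from the pair $(f_A, f_B)$ coordinatewise, and $g$ appears unchanged in each component.

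The main (and only) computation is checking that $\phi$ is a homomorphism. Take $(f_1, g_1), (f_2, g_2) \in (A \times B) \wr G$, and let $(f, g_1 g_2)$ be their product, so that $f(a) = f_1(a) f_2(g_1^{-1} a)$. Since multiplication in $A \times B$ is coordinatewise, the $A$-coordinate of $f(a)$ is $f_{1,A}(a) \cdot f_{2,A}(g_1^{-1} a)$, which is precisely the value of the first component of the product $(f_{1,A}, g_1)(f_{2,A}, g_2)$ in $A \wr G$ at $a$; symmetrically for the $B$-coordinate. Hence
\[
\phi((f_1,g_1)(f_2,g_2)) = \bigl((f_{1,A}, g_1)(f_{2,A}, g_2),\, (f_{1,B}, g_1)(f_{2,B}, g_2)\bigr) = \phi(f_1,g_1)\, \phi(f_2,g_2),
\]
as required.

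There is no real obstacle here: the verification is a routine unfolding of the wreath product multiplication rule, using only that the direct product operation in $A \times B$ is componentwise and that the $G$-action on the base group $\bigoplus_G (A \times B) \cong \bigoplus_G A \,\oplus\, \bigoplus_G B$ commutes with the splitting into the two coordinates. The only small point worth remarking on is the interaction of finite support with direct products, but this is immediate since $(x, y) = (1_A, 1_B)$ iff $x = 1_A$ and $y = 1_B$.
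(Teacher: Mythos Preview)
Your proof is correct and follows essentially the same approach as the paper: define the embedding via the two coordinate projections $\pi_A, \pi_B$, sending $(f,g)$ to $\bigl((\pi_A \circ f, g), (\pi_B \circ f, g)\bigr)$, then observe that injectivity and the homomorphism property are straightforward. You supply more detail than the paper (which simply declares these checks ``clear'' and ``easy to see''), but the construction is identical.
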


\begin{proof}
Let $\pi_A : A \times B \to A$ be the natural projection morphism
and similarly for $\pi_B : A \times B \to B$.
We define an embedding $\varphi : (A \times B) \wr G \to
(A \wr G) \times (B \wr G)$ by
$$
\varphi(f,g) = \bigg( (f \circ \pi_A,g), (f \circ \pi_B,g) \bigg).
$$
Clearly, $\varphi$ is injective. Moreover, it is easy to see that
$\varphi$ is a group homomorphism.
\qed
\end{proof}
A proof of the  following simple lemma can be found for instance in \cite{LohreySZ14}.

\begin{lemma}\label{finiteindex}
Let $K$ be a subgroup of $H$ of finite index $m$ and let $G$ be a group.  Then $G^m\wr K$ is isomorphic to a subgroup of index $m$ in $G\wr H$.
\end{lemma}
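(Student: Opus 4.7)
The plan is to construct an explicit embedding $\varphi : G^m \wr K \to G \wr H$ by ``unpacking'' $G^m$-valued functions on $K$ into $G$-valued functions on $H$. First, fix a system $T = \{t_1, \ldots, t_m\}$ of right coset representatives of $K$ in $H$, so that every $h \in H$ has a unique factorization $h = a t_i$ with $a \in K$ and $i \in \{1,\ldots,m\}$. For $(f,k) \in G^m \wr K$, writing $f(a) = (f_1(a), \ldots, f_m(a))$, define $\tilde f : H \to G$ by $\tilde f(a t_i) = f_i(a)$ and set $\varphi(f,k) = (\tilde f, k)$. The support of $\tilde f$ is bounded by $m$ times that of $f$, so $\varphi$ indeed lands in $G \wr H$, and injectivity is immediate since $f$ can be recovered coordinate-wise from $\tilde f$.

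The main verification is that $\varphi$ is a group homomorphism. The key observation is that left multiplication by an element of $K$ preserves each right coset $K t_i$: for $a,k \in K$ one has $k^{-1} (a t_i) = (k^{-1} a)\, t_i$, which still lies in $K t_i$. Using the definition of multiplication in $G^m \wr K$ and in $G \wr H$, both $\varphi\bigl((f_1,k_1)(f_2,k_2)\bigr)$ and $\varphi(f_1,k_1)\,\varphi(f_2,k_2)$ are computed at a point $a t_i$; both expressions reduce to the product (in $G$) $f_{1,i}(a)\cdot f_{2,i}(k_1^{-1}a)$, with second component $k_1 k_2$. This establishes that $\varphi$ is a homomorphism.

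It remains to identify the image of $\varphi$ as a subgroup of index $m$. I claim that the image equals $N := \{(F, h) \in G \wr H \mid h \in K\}$. The inclusion $\varphi(G^m \wr K) \subseteq N$ is by construction; conversely, given any $(F,k) \in N$, the map $f : K \to G^m$ defined by $f(a) = \bigl(F(a t_1), \ldots, F(a t_m)\bigr)$ has finite support and satisfies $\varphi(f,k) = (F,k)$. To compute the index, note that the projection $G \wr H \to H$, $(F,h) \mapsto h$, is a surjective group homomorphism whose preimage of $K$ is exactly $N$. Hence $[G \wr H : N] = [H : K] = m$, completing the proof. I expect no substantive obstacle; the only real point of care is choosing right (rather than left) coset representatives, so that the shift $h \mapsto k^{-1} h$ appearing in the wreath-product multiplication preserves each coset $K t_i$ whenever $k \in K$.
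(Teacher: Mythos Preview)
Your proof is correct and complete. The paper itself does not give a proof of this lemma; it merely states that a proof ``can be found for instance in \cite{LohreySZ14}'', so there is nothing in the paper to compare against. Your argument is the standard one: choosing right coset representatives so that the left shift by elements of $K$ stabilizes each coset $Kt_i$ is exactly the point that makes the unpacking map a homomorphism, and identifying the image as the preimage of $K$ under the projection $G \wr H \to H$ immediately gives index $m$.
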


\subsection{Compressed word problems} \label{sec-cwp}

Let $G$ be a finitely generated group and let $\Sigma$ be a finite generating set for $G$, i.e.,
every element of $G$ can be written as a finite product of elements from $\Sigma$ and inverses
of elements from $\Sigma$.
Let $\Gamma = \Sigma \cup \{ a^{-1} \mid a \in \Sigma\}$. For a word $w \in \Gamma^*$ we write
$w=1$ in $G$ if and only if the word $w$ evaluates to the identity of $G$. The {\em word problem}
for $G$ asks, whether $w=1$ in $G$ for a given input word. There exist finitely generated groups and 
in fact finitely presented groups (groups that are defined by finitely many defining relations) with an 
undecidable word problem. Here, we are interested in the {\em compressed word problem} for a finitely
generated group. For this, the input word $w$ is given in compressed form by a one-dimensional SLP as defined in Section~\ref{sec-SLP}.
Recall that a one-dimensional picture over an alphabet $\Gamma$ is simply a finite word over $\Gamma$.
Hence, $\val(\dA)$ is a word if $\dA$ is a one-dimensional SLP.
In the following we always mean one-dimensional SLPs when using the term SLP. 
The compressed word problem for $G$ asks, whether $\val(\dA)=1$ in $G$ for a given SLP $\dA$.

The compressed word problem is related to the classical word problem. For instance, the classical word problem for 
a f.g. subgroup of the automorphism group of a group $G$ can be reduced to the compressed word problem for $G$,
and similar results are known for certain group extensions, see \cite{Loh14} for more details. Groups, for which the compressed
word problem can be solved in polynomial time are  \cite{Loh14}: finite groups,  f.g. nilpotent groups, f.g. free groups,
graph groups (also known as right-angled Artin groups or partially commutative groups), and 
virtually special groups, which are groups that have a finite index subgroup that embeds into a graph group.
The latter groups form a rather large class that include for instance Coxeter groups, one-relator groups with torsion,
residually free groups, and fundamental groups of hyperbolic 3-manifolds.
In \cite{BeMcPeTh97} the parallel complexity of the compressed word problem (there, called the circuit evaluation problem) 
for finite groups was studied, and the following result was shown:

\begin{theorem}[\cite{BeMcPeTh97}] \label{thm-CWP-finite-groups}
Let $G$ be a finite group. If $G$ is solvable, then $\CWP(G)$ belongs to the class $\NC^2$.
If $G$ is not solvable, then $\CWP(G)$ is $\mathsf{P}$-complete.
\end{theorem}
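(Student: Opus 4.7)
The plan is to treat the two parts separately, using induction on the derived length of $G$ for the solvable upper bound and a Barrington-style simulation for the non-solvable lower bound. Note that membership of $\CWP(G)$ in $\Ptime$ holds trivially for any finite $G$: evaluate the SLP bottom up, storing for each variable its constant-size value in $G$.

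For the $\NC^2$ upper bound, I would induct on the derived length $d$ of $G$. The base case $d=1$ is when $G$ is finite abelian, so $G$ embeds into $\mathbb{Z}_{n_1} \times \cdots \times \mathbb{Z}_{n_k}$; each generator $a \in \Gamma$ (and its inverse) corresponds to a tuple in this product, and $\val(\dA) = 1$ iff the componentwise signed sum over the generated word vanishes modulo $(n_1,\ldots,n_k)$. Replacing every production $A \to B \circ C$ by $A \to B + C$ yields an additive (hence vacuously skew) circuit over $\mathbb{Z}$, which by Lemma~\ref{lemma-evaluate-skew} is evaluable in $\NC^2$; a final reduction modulo the $n_i$ decides the instance. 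For the inductive step $d \geq 2$, take $N = G^{(d-1)}$ to be the last nontrivial term of the derived series: $N$ is abelian and normal in $G$, and $G/N$ is solvable of derived length $d-1$. Run the inductive algorithm on $G/N$; if $\val(\dA) \notin N$, reject. Otherwise, fix a transversal $T$ for $N$ in $G$ and decompose each generator $a \in \Gamma$ as $n_a t_a$ with $n_a \in N$, $t_a \in T$. The conjugation action of $G/N$ on the abelian $N$ lets one transform $\dA$ in $\NC^2$ into an SLP over generators of $N$ with the same value as $\val(\dA)$, after which the abelian base case applies. Since $d$ depends only on $G$, only a constant number of $\NC^2$ layers are needed.

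For the $\Ptime$-hardness in the non-solvable case, I would reduce the Boolean circuit value problem (known to be $\Ptime$-complete). A non-solvable finite group has a non-abelian simple composition factor, so it contains (in a suitable section) a copy of $A_5$ or similar. By a Barrington-style encoding, Boolean AND/OR/NOT gates can be realized as constant-length words in $G$ so that gate composition becomes group multiplication. Given a Boolean circuit $C$, the standard simulation yields, in logspace, an SLP $\dA$ whose value in $G$ is the identity iff $C$ outputs $0$; the tree structure of the circuit carries over directly to the SLP, since concatenation in the free group is associative and the simulation only uses products of constant-size pieces.

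The main obstacle is the inductive step in the solvable case: converting a $G$-SLP into an equivalent $N$-SLP in parallel requires tracking the twisting of each compressed factor by the conjugation action of $G/N$ without blowing up the depth. This is ultimately manageable because $G$ has constant size and $N$ is abelian — each twist is a fixed endomorphism of $N$, and products of twists over a compressed prefix can be computed by evaluating a skew circuit over $\mathrm{End}(N)$, which fits Lemma~\ref{lemma-evaluate-skew} — but the clean parallel bookkeeping across all the levels of the derived series is the delicate technical point.
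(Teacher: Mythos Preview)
The paper does not give its own proof of this statement; it is quoted as a result of \cite{BeMcPeTh97}, so there is nothing in the paper to compare your proposal against.

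That said, your outline follows the natural route and is essentially correct, but two points deserve care. In the inductive step you only invoke $\CWP(G/N)$ on the start variable, yet the $N$-computation you describe needs the coset $g_B \in G/N$ of \emph{every} variable $B$: the recursion for the $N$-part is $n_A = n_B + \varphi(g_B)(n_C) + c(g_B,g_C)$ (where $c$ is the extension cocycle), and this is skew only once all the $\varphi(g_B)$ are known constants. Your sentence about computing ``products of twists over a compressed prefix by evaluating a skew circuit over $\mathrm{End}(N)$'' is misleading as written, since the recursion $\varphi(g_A) = \varphi(g_B)\varphi(g_C)$ is a fully multiplicative circuit over a non-commutative monoid and Lemma~\ref{lemma-evaluate-skew} does not apply to it. The fix is easy: compute $g_B$ for all $B$ by running the inductive algorithm $|V|\cdot|G/N|$ times in parallel (one instance per variable and per candidate coset), after which the $n$-recursion is a genuine skew circuit over $\mathbb{Z}_m$ with $m$ the exponent of $N$. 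For the hardness direction, a non-solvable $G$ need not contain $A_5$ (or any non-abelian simple group) as a subgroup; Barrington's commutator trick only uses that the derived series does not reach $1$, so that suitable iterated commutators remain non-trivial. With these adjustments your plan goes through.
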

The following two results are proven in \cite{Loh14}. 

\begin{theorem}[Theorem 4.15 in \cite{Loh14}] \label{thm-CWP-linear-groups}
For every f.g. linear group the compressed word problem belongs to the class $\mathsf{coRP}$.
\end{theorem}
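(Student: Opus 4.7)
The plan is to reduce the compressed word problem for a f.g. linear group $G$ to polynomial identity testing over a suitably chosen ring, and then invoke Theorem~\ref{thm-ibarra}. First I would fix the algebraic setup: let $G \leq GL_n(F)$ be f.g. with generating set $\Sigma$, and for each $\sigma \in \Sigma \cup \Sigma^{-1}$ let $M_\sigma \in GL_n(F)$ be the corresponding matrix. Let $R \subseteq F$ be the subring generated by the entries of all of these matrices, \emph{including those of the inverses}, so that no division will ever be needed during the evaluation of a word. Since only finitely many entries are involved, $R$ is a f.g.\ commutative integral domain, and each $M_\sigma$ lies in $M_n(R)$.

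Given an input SLP $\dA$ over $\Sigma \cup \Sigma^{-1}$, I would next build arithmetic circuits over $R$ for the entries of the matrix $M_{\val(\dA)} \in M_n(R)$ corresponding to $\val(\dA)$. Proceeding by induction on the variables of $\dA$, each rule $A \to BC$ is expanded into the $n^2$ arithmetic rules $[A]_{ij} = \sum_{k=1}^n [B]_{ik}[C]_{kj}$, and each terminal rule $A \to \sigma$ into input gates for the entries of $M_\sigma$. This yields $n^2$ arithmetic circuits $\cC_{ij}$ over $R$, of total size polynomial in $|\dA|$ (since $n$ is a constant depending only on $G$). Then $\val(\dA) = 1_G$ iff $\cC_{ij}$ evaluates to $\delta_{ij}$ in $R$ for all $i,j$, i.e.\ iff $n^2$ PIT instances over $R$ are simultaneously satisfied.

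Finally, I would reduce PIT over $R$ to PIT over $\mathbb{Z}$ or $\mathbb{F}_p$, to which Theorem~\ref{thm-ibarra} applies. Since $R$ is a f.g.\ integral domain of characteristic $0$ or $p$, the structure theorem for f.g.\ field extensions presents $\mathrm{Frac}(R)$ as a finite algebraic extension $P(y_1,\ldots,y_s)(\alpha)$ of a purely transcendental extension of the prime subfield $P$, where $\alpha$ has a minimal polynomial $m(y_1,\ldots,y_s,z)$ over $P[y_1,\ldots,y_s]$. Each input gate of $\cC_{ij}$ (an entry of some $M_\sigma$) can be written as a fixed polynomial in $y_1,\ldots,y_s,\alpha$ whose size depends only on $G$. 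Substituting these expressions and reducing modulo $m$ converts each circuit $\cC_{ij}$ into an equivalent arithmetic circuit computing a polynomial in $P[y_1,\ldots,y_s,z]$, and Theorem~\ref{thm-ibarra} places PIT for this polynomial ring in $\coRP$.

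The main obstacle is this last step: correctly handling the algebraic side of $\mathrm{Frac}(R)$. The preprocessing that produces the transcendence basis and the minimal polynomial of $\alpha$ depends only on the group $G$, not on the SLP input, so it is a constant-size precomputation; what must be verified carefully is that substituting expressions for $\alpha$ through the entire circuit and reducing modulo $m$ blows up the circuit size only polynomially, so that the final PIT instance over $P[y_1,\ldots,y_s,z]$ remains polynomial-size in $|\dA|$ and the $\coRP$-bound from Theorem~\ref{thm-ibarra} is preserved.
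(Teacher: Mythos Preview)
Your proposal is correct and matches the approach sketched in the paper, which simply states that the result is obtained by reducing the compressed word problem for a f.g.\ linear group to polynomial identity testing over $\mathbb{Z}$ or $\mathbb{Z}_p$ (depending on the characteristic of the field) and then invoking Theorem~\ref{thm-ibarra}. Your outline fills in precisely those details---building matrix-entry circuits from the SLP, then passing from the f.g.\ domain $R$ to a polynomial ring over the prime field via a transcendence basis and a primitive element---and your flagged obstacle (that the reduction modulo the minimal polynomial $m$ keeps the circuit polynomial-size) is resolved by noting that $\deg m$ is a constant depending only on $G$, so each gate is replaced by a bounded number of gates computing the coefficients of $1,\alpha,\ldots,\alpha^{\deg m - 1}$.
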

This result is shown by reducing the compressed word problem for a f.g. linear group to polynomial identity testing 
for the ring $\mathbb{Z}$. Also a kind of converse of Theorem~ \ref{thm-CWP-linear-groups} is shown in \cite{Loh14}:

\begin{theorem}[Theorem 4.16 in \cite{Loh14}] \label{thm-CWP-SL3}
The problem $\CWP(\mathsf{SL}_3(\mathbb{Z}))$ and polynomial identity testing
for the ring $\mathbb{Z}$ are polynomial time reducible to each other.
\end{theorem}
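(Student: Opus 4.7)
The theorem asserts polynomial-time reductions in both directions. The reduction $\CWP(\mathsf{SL}_3(\mathbb{Z})) \leq_P \text{PIT}$ is essentially Theorem~\ref{thm-CWP-linear-groups} applied to the linear group $\mathsf{SL}_3(\mathbb{Z})$. Spelled out, given an SLP $\dA$ over a fixed generating set of $\mathsf{SL}_3(\mathbb{Z})$, I would process $\dA$ bottom-up and, for each variable $A$, introduce nine integer-arithmetic gates computing the nine entries of the matrix $\val_\dA(A)$ via the standard $3\times 3$ product formulas at each production $A=BC$. This yields nine arithmetic circuits of size polynomial in $|\dA|$, and $\val(\dA)=1$ in $\mathsf{SL}_3(\mathbb{Z})$ iff the three diagonal circuits evaluate to $1$ and the six off-diagonal circuits evaluate to $0$, a conjunction of nine PIT instances over $\mathbb{Z}$.

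For the interesting direction, $\text{PIT} \leq_P \CWP(\mathsf{SL}_3(\mathbb{Z}))$, I would use the construction of Ben-Or and Cleve~\cite{Ben-OrC92}. They show that any arithmetic formula $F(x_1,\ldots,x_n)$ over a commutative ring $R$ can be simulated by a product of elementary matrices $E_{ij}(c)=I+c\cdot e_{ij}$ in $\mathsf{SL}_3(R)$, with $c$ a constant or a variable, arranged so that a distinguished entry of the matrix product equals $F$; the crucial gadget is the commutator identity $[E_{12}(a), E_{23}(b)] = E_{13}(ab)$, which lets the group multiplication realise ring multiplication. Their addition and multiplication gadgets are constant-size combinations of the matrix sequences for their operands, so the construction is naturally recursive. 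Given an arithmetic circuit $C$ for $p(x_1,\ldots,x_n) \in \mathbb{Z}[x_1,\ldots,x_n]$, my plan is to traverse $C$ in topological order and, for every gate $g$ with inputs $g_1, g_2$, introduce an SLP variable $A_g$ whose right-hand side plugs $A_{g_1}$ and $A_{g_2}$ into the Ben-Or--Cleve gadget for $g$'s operation. Shared subcircuits become shared SLP variables, keeping the SLP of size polynomial in $|C|$, even though the unfolded group word may be exponentially long.

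The technical heart is to rewrite the variable-labelled elementary factors $E_{ij}(x_k)$ into short words over a fixed finite generating set of $\mathsf{SL}_3(\mathbb{Z})$ so that the resulting SLP genuinely describes a $\CWP(\mathsf{SL}_3(\mathbb{Z}))$ instance and the overall matrix product is $I_3$ iff $p \equiv 0$. My plan is to realise each variable $x_k$ by a concrete matrix $X_k$ sitting in a subgroup of $\mathsf{SL}_3(\mathbb{Z})$ chosen so that the substitution $\mathbb{Z}[x_1,\ldots,x_n] \to \mathsf{SL}_3(\mathbb{Z})$ induced by the Ben-Or--Cleve arrangement is faithful, e.g.\ by placing the $X_k$ inside a suitably free subgroup or inside a $\mathbb{Z}^n$-subgroup whose monomials populate distinct matrix positions; then the distinguished output entry of the final matrix equals $p(X_1,\ldots,X_n)$ and the other entries are forced to those of $I_3$ by the construction. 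I expect this encoding to be the main obstacle: one has to certify that no polynomial identity arises by accident from the chosen $X_k$, which needs an algebraic-independence argument, and one has to thread the gadget bookkeeping through the recursion on $C$ so that the ``accumulator'' entry at every subgate holds exactly the polynomial computed at that gate. Once both points are handled, logspace and polynomial-time uniformity of the reduction follow routinely.
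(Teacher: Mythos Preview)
The paper does not actually prove this theorem; it is quoted from \cite{Loh14}, with only the remark that the proof ``is based on a construction of Ben-Or and Cleve \cite{Ben-OrC92}''. Your outline for the direction $\CWP(\mathsf{SL}_3(\mathbb{Z})) \leq_P \text{PIT}$ is correct, and for the other direction you have located the same key ingredient.

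There is, however, a genuine gap in your handling of the variables. The Ben-Or--Cleve construction turns a formula over a commutative ring $R$ into a product of elementary matrices $E_{ij}(c)$ with $c \in R$. When $R = \mathbb{Z}[x_1,\ldots,x_n]$ the factors $E_{ij}(x_k)$ lie in $\mathsf{SL}_3(\mathbb{Z}[x_1,\ldots,x_n])$, not in $\mathsf{SL}_3(\mathbb{Z})$. Your proposal to replace each $x_k$ by a concrete matrix $X_k \in \mathsf{SL}_3(\mathbb{Z})$ and appeal to ``algebraic independence'' cannot work: every integer is algebraic over $\mathbb{Z}$, so for any fixed tuple of commuting integer matrices there are nonzero integer polynomials vanishing on their joint eigenvalues. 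A free subgroup does not help either, since what you need is the absence of \emph{polynomial} relations among commuting quantities, not the absence of group relations.

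The route taken in \cite{Loh14} is to eliminate the variables first. From the circuit one computes bounds on the degree in each $x_i$ and on the bit-length of the coefficients, and then substitutes $x_i \mapsto 2^{N_i}$ for suitably spaced exponents $N_i$ so that distinct monomials of $p$ occupy disjoint bit-ranges of the resulting integer; hence $p \equiv 0$ iff that integer is $0$. The powers $2^{N_i}$ are produced by small subcircuits via iterated squaring, so this is a polynomial-time reduction from PIT over $\mathbb{Z}$ to the variable-free problem (a circuit over $\mathbb{Z}$ with constants only). Now Ben-Or--Cleve applies over $R = \mathbb{Z}$ and yields a product of matrices $E_{ij}(\pm 1) \in \mathsf{SL}_3(\mathbb{Z})$; your SLP-variable-per-gate scheme (with a constant number of register-indexed copies and their inverses, since the multiplication gadget reuses each subprogram) then compresses this exponentially long product to a polynomial-size SLP over a fixed generating set of $\mathsf{SL}_3(\mathbb{Z})$.
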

This result is shown by using the construction of Ben-Or and Cleve \cite{Ben-OrC92} for simulating arithmetic circuits by matrix products.

Finally, the following result was recently shown in \cite{KoeLo15nil}; it generalizes Theorem~\ref{thm-CWP-finite-groups}.

\begin{theorem}[\cite{KoeLo15nil}] \label{thm-CWP-nilpotent}
Let $G$ be a f.g. group having a normal subgroup $H$ such that $H$ is f.g. nilpotent and 
the quotient group $G/H$ is finite solvable. Then $\CWP(G) \in \NC^2$.
\end{theorem}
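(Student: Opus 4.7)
The plan is a two-stage reduction along the extension $1 \to H \to G \to G/H \to 1$. In Stage~1 I reduce $\CWP(G)$ to $\CWP(H)$ in $\NC^2$, using Theorem~\ref{thm-CWP-finite-groups} to handle the finite solvable quotient. In Stage~2 I solve $\CWP(H)$ for f.g.\ nilpotent $H$ in $\NC^2$ by reorganising the matrix-product computation as $O(1)$ iterated skew circuit evaluations via Lemma~\ref{lemma-evaluate-skew}, bypassing any appeal to randomised identity testing.

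For Stage~1, given an input SLP $\dA$ over a generating set of $G$, first project $\dA$ to an SLP over the finite solvable quotient $G/H$ by replacing each terminal with its coset, and apply Theorem~\ref{thm-CWP-finite-groups} to decide in $\NC^2$ whether $\val(\dA)\in H$; if not, reject. Otherwise build in $\NC^2$ a new SLP $\dB$ over a finite generating set of $H$ with $\val(\dB)=\val(\dA)$. Fix a finite set $T$ of coset representatives of $G/H$, and for every variable $A$ of $\dA$ and every $t\in T$ introduce a variable $H_A^{(t)}$ whose intended value is $t\cdot(t_A^{-1}\val_\dA(A))\cdot t^{-1}\in H$, where $t_A\in T$ is the coset representative of $\val_\dA(A)$. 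A direct computation on each rule $A\to B\circ C$ shows that $H_A^{(t)}$ rewrites as the product of two previously introduced $H$-variables and a single fixed correction element from $H$, so one obtains an SLP over $H$ of size $O(|T|\cdot|\dA|)$ after absorbing the finitely many correction elements into the generating set of $H$.

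For Stage~2, fix a faithful embedding $\iota\colon H\hookrightarrow UT_d(\mathbb{Z})$ given by a Mal'cev basis (after a preliminary reduction to the torsion-free case via Theorem~\ref{thm-CWP-finite-groups}, again along the lines of Stage~1). A product of $N$ upper unitriangular $d\times d$ integer matrices with $O(1)$-sized entries has entries of magnitude $O(N^{d-1})$, so every entry of $\iota(\val(\dB))$ has $\mathrm{poly}(|\dB|)$ bits. Writing each generator matrix as $I+N_i$ with $N_i^d=0$, one has
\[
\iota(\val(\dB)) \;=\; \sum_{k=0}^{d-1}\ \sum_{i_1<\cdots<i_k} N_{i_1}\cdots N_{i_k},
\]
the $i_j$ ranging over positions of the word $\val(\dB)$. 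For every variable $X$ of $\dB$ and $0\leq k\leq d-1$ let $\sigma_k(X)$ denote the analogous inner sum over positions of $\val_\dB(X)$. Splitting a chain at the concatenation point of a rule $A\to B\circ C$ gives the recursion
\[
\sigma_k(A) \;=\; \sigma_k(B) + \sigma_k(C) + \sum_{\ell=1}^{k-1} \sigma_\ell(B)\cdot \sigma_{k-\ell}(C).
\]
Computing the $\sigma_k$ in increasing order of $k$ makes each step skew: at the stage that computes $\sigma_k$, all products $\sigma_\ell(B)\cdot\sigma_{k-\ell}(C)$ with $\ell<k$ are already available and act as constants, leaving an additive recursion $\sigma_k(A)=\sigma_k(B)+\sigma_k(C)+c_{A,B,C}$ for each variable $A$. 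Entry by entry this is a skew (purely additive) circuit over $\mathbb{Z}$ with polynomially bounded integers, evaluable in $\NC^2$ by Lemma~\ref{lemma-evaluate-skew}(i). Since $d$ is a constant of the group, this amounts to $O(1)$ successive $\NC^2$ computations; the final matrix is then compared entry-wise to $I_d$.

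The main obstacle is Stage~2: the direct matrix-product recursion $M_A=M_B M_C$ is not skew, so Lemma~\ref{lemma-evaluate-skew} does not apply to it, and a naive DAG-evaluation of $\dB$ is only in $\mathsf{P}$. The decisive observation is that the nilpotent decomposition $M=\sum_k \sigma_k$ stratifies the recursion: at level $k$ the only genuinely new unknowns are the $\sigma_k$'s themselves, while all products $\sigma_\ell(B)\cdot\sigma_{k-\ell}(C)$ with $\ell<k$ are constants inherited from earlier stages. This converts the whole computation into a constant number of successive skew circuit evaluations, thereby avoiding both $\mathsf{P}$-hardness and the detour through randomised PIT that a direct use of Theorem~\ref{thm-PIT-coRNC} would force.
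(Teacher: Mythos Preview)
The paper does not prove Theorem~\ref{thm-CWP-nilpotent}; it is merely quoted from \cite{KoeLo15nil}, so there is no in-paper proof to compare your argument against.

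That said, your two-stage plan is the right one and is in the spirit of \cite{KoeLo15nil}: reduce along the extension $1 \to H \to G \to G/H \to 1$ using the $\NC^2$ evaluation for finite solvable groups, and then handle $\CWP(H)$ via an embedding into $UT_d(\mathbb{Z})$ together with the stratification $\prod_i (I+N_i)=\sum_{k<d}\sigma_k$, where each layer $\sigma_k$ satisfies an additive (hence skew) recursion once the lower layers are known. Two points need tightening. First, your ``preliminary reduction to the torsion-free case \ldots\ along the lines of Stage~1'' is misdirected as written: the extension $1\to T(H)\to H\to H/T(H)\to 1$ has \emph{infinite} quotient, so Stage~1 does not apply to it. The fix is a preprocessing on the group rather than on the SLP: pass to a torsion-free finite-index subgroup $K\leq H$ (which exists for f.g.\ nilpotent $H$), replace it by its normal core in $G$, and observe that $G/K$ is still finite solvable; then run Stage~1 with $K$ in place of $H$. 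Second, Stage~1 requires, for every variable $A$, the actual image of $\val_\dA(A)$ in $G/H$, not merely the bit ``is it $1$?''; this is the full circuit-evaluation problem over a finite solvable group, which is what \cite{BeMcPeTh97} places in $\NC^2$ (and which follows from Theorem~\ref{thm-CWP-finite-groups} by running $|G/H|$ instances per variable in parallel). With these two adjustments your argument goes through.
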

To the knowledge of the author, there is no example of a group $G$ not having the properties from
Theorem~\ref{thm-CWP-nilpotent}, for which $\CWP(G)$ belongs to $\NC$.

\subsection{$\CWP(\mathbb{Z}  \wr \mathbb{Z})$ and identity testing for powerful skew circuits}
 
 In this section, we explore the relationship between the compressed word problem for 
 the wreath product  $\mathbb{Z}  \wr \mathbb{Z}$ and polynomial identity testing
 for powerful skew circuits. We show that these
 two problems are equivalent w.r.t. $\NC^2$-reductions. 
 
Let $G = \mathbb{Z} \wr \mathbb{Z}$. We consider the generators $a$ and $t$
of $G$, where $t$ (resp., $a$) generates the $\mathbb{Z}$ on the left (resp., right). 
So, with $a$ (resp., $a^{-1}$) we move the cursor to the left (resp., right) and
with $t$ (resp., $t^{-1}$) we add one (resp., subtract one) from the value at the 
current cursor position. Let $\Gamma = \{a,t,a^{-1}, t^{-1}\}$.

For a word $w \in \Gamma^*$  we
define  $\diff(w) = |w|_a - |w|_{a^{-1}} \in \mathbb{Z}$.  The word $w$ is \emph{positive} if 
$\diff(u)\geq 0$ for every prefix $u$ of $w$ that ends with $t$ or $t^{-1}$.
The word $w$ is \emph{well-formed}, if it is positive and  $\diff(w) = 0$.
If $w$ is positive and $(f,g) \in G$ is a group element represented by
the word $w$, then $f(x) \neq 0$ implies that $x \in \mathbb{N}$ (intuitively, the $\mathbb{Z}$-generator
$t$ or its inverse is never added to a position outside of $\mathbb{N}$). If in addition $w$ is well-formed
then $g = 0$.
For a given positive word $w \in \Gamma^*$ 
we define a polynomial $p_w(x) \in \mathbb{Z}[x]$ 
inductively as follows:
\begin{itemize}
\item $p_\varepsilon(x)=0.$
\item If $w = ua$ or $w = ua^{-1}$, then $p_w(x) = p_u(x)$.
\item If $w = ut^{\delta}$ with $\delta \in \{1,-1\}$, then $p_w(x) = p_u(x) + \delta \cdot x^{d}$, where
$d = \diff(w) = \diff(u)$. 
\end{itemize}
If the positive word $w$ represents the group element $(f,g) \in G$, then
the polynomial $p_w(x)$ encodes the mapping $f$ in the following sense:
The coefficient of the monomial $x^e$ in $p_w(x)$ is exactly
$f(e)$. In particular, the following equivalence holds for every positive word $w \in \Gamma^*$:
$$
w=1 \text{ in } G \quad \Leftrightarrow \quad (p_w(x) = 0 
\text{ and } \diff(w) = 0 )
$$

\begin{lemma} \label{lemma-SLP->circuit}
From a given SLP $\dA$ over the alphabet $\Gamma$ one
can compute in $\NC^2$ a 
powerful skew circuit $\cC$ 
such that 
$\poly(\cC) = p_{w}(x)$, where
$w =  a^k \,  \val(\dA) \,  a^{-k}$ and 
$k = |\val(\dA)|$.
In particular, $\val(\dA)=1$ in $G$ if and only if $(\poly(\cC)=0$ and $\diff(\val(\dA)) =  0)$.
\end{lemma}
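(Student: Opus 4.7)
The plan is to mirror the structure of $\dA$ by a powerful skew circuit, assigning to each variable $A$ of $\dA$ a polynomial $P_A(x)\in\mathbb{Z}[x]$ that is a \emph{shifted} version of the partial $t$-contribution polynomial of $\val(A)$. Put $L_A=|\val(A)|$, $d_A=\diff(\val(A))$, and consider the formal expression $\tilde p_{\val(A)}(x)=\sum\delta\cdot x^{\diff(u)}\in\mathbb{Z}[x,x^{-1}]$, where the sum ranges over all factorisations $\val(A)=u\cdot t^{\delta}\cdot v$. Since $|\diff(u)|\leq L_A$, the polynomial $P_A(x):=x^{L_A}\cdot\tilde p_{\val(A)}(x)$ lies in $\mathbb{Z}[x]$ with exponents in $[0,2L_A]$; this $P_A$ is the intended value at the gate associated with $A$ in $\cC$.

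Splitting prefixes of $\val(B)\val(C)$ according to whether they lie in $\val(B)$ or extend into $\val(C)$ yields $\tilde p_{\val(A)}=\tilde p_{\val(B)}+x^{d_B}\tilde p_{\val(C)}$ when $\rhs_{\dA}(A)=BC$, which after multiplying by $x^{L_A}=x^{L_B+L_C}$ becomes the recurrence
$$
P_A(x)\;=\;x^{L_C}\cdot P_B(x)\;+\;x^{L_B+d_B}\cdot P_C(x).
$$
Because $|d_B|\leq L_B$, both exponents $L_C$ and $L_B+d_B$ are nonnegative and binary-encoded by $O(|\dA|)$ bits, so each of the two summands is exactly the output of a multiplication gate in a powerful skew circuit. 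At the leaves I set $P_A=0$ when $\rhs_{\dA}(A)\in\{a,a^{-1}\}$, $P_A=x$ when $\rhs_{\dA}(A)=t$, and $P_A=-x$ when $\rhs_{\dA}(A)=t^{-1}$.

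The $\NC^2$ implementation then proceeds in two stages. First I compute in parallel the binary encodings of all $L_A$ and $d_A$; both arise as the values of purely additive skew circuits over $\mathbb{Z}$ extracted from $\dA$, and so are available in $\NC^2$ by Lemma~\ref{lemma-evaluate-skew}. Using these numbers as monomial exponents I assemble $\cC$ gate by gate and output $P_S$. Since $L_S=k$, one checks $p_w(x)=x^{k}\tilde p_{\val(\dA)}(x)=x^{k-L_S}P_S(x)=\poly(\cC)$, using that the padding $a^{\pm k}$ only shifts the effective cursor by $k$ and contributes no $t$-monomial. The "in particular" clause is then immediate: conjugation by $a^k$ is an automorphism of $G$, so $w=1\Leftrightarrow\val(\dA)=1$ in $G$, while $\diff(w)=\diff(\val(\dA))$, and the equivalence $w=1\Leftrightarrow(p_w=0\text{ and }\diff(w)=0)$ valid for positive $w$ closes the argument. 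The one real obstacle is that $\tilde p_{\val(A)}$ genuinely contains negative exponents whenever some prefix of $\val(A)$ has negative $\diff$; the dual shift $x^{L_A}$ absorbs this and ensures that every monomial multiplier appearing in the recurrence is a \emph{nonnegative} power of $x$, which is exactly what the powerful skew format permits.
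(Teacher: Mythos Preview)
Your argument is correct, and it is in fact a cleaner variant of the paper's own proof. Both proofs build a powerful skew circuit whose gates are indexed by the variables of $\dA$, with a recurrence along $\rhs_{\dA}(A)=BC$; the only issue is that the ``partial'' polynomial attached to $A$ is naturally a Laurent polynomial, and one must choose a shift so that all monomial multipliers in the recurrence are genuine powers $x^N$ with $N\ge 0$.

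The paper handles this differently: it first rewrites the SLP by replacing every leaf $t^{\delta}$ by $A_k t^{\delta} A_k^{-1}$ (so that every $\val_\dB(A)$ becomes positive), then sets $p_A(x)=p_{\val_\dB(A)}(x)$ and factors $p_A(x)=x^{m_A}q_A(x)$ with $m_A$ the \emph{minimal} prefix-$\diff$ reaching a $t^{\pm1}$. The values $m_A$ are computed by an auxiliary skew circuit over the $(\min,+)$-semiring, and the recurrence for $q_A$ carries case distinctions depending on which of $m_B$ and $d_B+m_C$ is smaller. You instead shift by the \emph{length} $L_A$, which overshoots but is harmless: your exponents $L_C$ and $L_B+d_B$ are trivially nonnegative, and all auxiliary data ($L_A$, $d_A$) come from purely additive circuits, so you never invoke the $(\min,+)$ part of Lemma~\ref{lemma-evaluate-skew}. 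This buys you a shorter construction with no SLP rewriting and no case split; the only thing the paper's tighter shift would buy is smaller intermediate degrees, which is irrelevant for the $\NC^2$ bound.
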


\begin{proof} 
Let $k = |\val(\dA)|$. Our construction is divided into the following two
steps:

\medskip
\noindent
{\em Step 1.} 
Using iterated squaring, we add further nonterminals to $\dA$ such that 
$\val(A_k) = a^k$ and $\val(A_k^{-1}) = a^{-k}$
for distinguished non-terminals $A_k$ and $A_k^{-1}$. 
Then, we define the SLP $\dB$ by defining $\rhs_{\dB}(A) = A_k t^{\delta} A_k^{-1}$
for every variable $A$ with $\rhs_{\dA}(A) = t^{\delta}$  ($\delta \in \{-1,1\}$).
All other right-hand sides of $\dA$ are left unchanged.
Then, $\val(\dB) = a^k \, \val(\dA) \, a^{-k}$ 
in $G$. 

Let $\dB = (V,\rhs_{\dB},S)$ for the further consideration.
Note that for every $A \in V$,  the word $\val_{\dB}(A)$ is positive.
Hence, for every $A \in V$ we can define the polynomial $p_A(x) := p_{\val(A)}(x)$.
Moreover, let $d_{A} = \diff(\val(A)) \in \mathbb{Z}$; these numbers $d_{A}$ can be computed by an additive circuit in $\NC^2$, 
see Lemma~\ref{lemma-evaluate-skew}.

For every $A \in V$  let 
$$
m_{A} = \min(\{ \diff(u) \mid u \text{ is a prefix of $\val(A)$ that ends
  with $t$ or $t^{-1}$} \}),
$$
where we set $\min(\emptyset)=0$.
Since $\val(A)$ is positive, we have $m_{A} \geq 0$.
The polynomial $p_A(x)$ can be uniquely written as 
$$
p_A(x) = x^{m_A}  \cdot q_A(x),
$$
for a polynomial $q_A(x)$. 
The numbers $m_{A}$ can be computed
in $\NC^2$, using the following identity, where $\alpha(A)$ denotes the set of symbols occurring in $\val_{\dB}(A)$.
$$
m_{A} = \begin{cases} 
  0 & \text{ if } \rhs_{\dB}(A) = a^\delta \\
  k & \text{ if } \rhs_{\dB}(A) = A_k t^\delta A_k^{-1} \\
  \min \{m_{B}, d_{B}+m_{C}\} & \text{ if } \rhs_{\dB}(A) = BC \text{ and } \alpha(C) \cap \{t, t^{-1}\} \neq \emptyset \\
  m_{B}  & \text{ if } \rhs_{\dB}(A) = BC \text{ and } \alpha(C) \cap \{t, t^{-1}\} = \emptyset 
  \end{cases}
$$
Note that these rules define a skew circuit in the semiring $(\mathbb{Z} \cup \{\infty\}, \min, +)$. Hence, by
Lemma~\ref{lemma-evaluate-skew} the circuit can be evaluated in $\NC^2$.

\medskip
\noindent
{\em Step 2.}
We now construct a circuit $\cC$ such that
for every $A \in V$ we have:
$$
\poly_{\cC}(A) =  q_A(x) .
$$
We define the rules of the circuit $\cC$ as follows:
\begin{itemize}
\item If $\rhs_{\dB}(A) = a^\delta$ for $\delta \in \{-1,1\}$, then we set $\rhs_{\cal C}(A) = 0$.
\item If $\rhs_{\dB}(A) = A_k t^\delta A_k^{-1}$ for $\delta \in \{-1,1\}$, then we set $\rhs_{\cal C}(A) = \delta$.
\item If $\rhs_{\dB}(A) = BC$ and $\alpha(C) \cap \{t, t^{-1}\} = \emptyset$, then we set $\rhs_{\cal C}(A) = B$.
\item If $\rhs_{\dB}(A) = BC$ and $\alpha(C) \cap \{t, t^{-1}\} \neq \emptyset$, then $m_{A} = \min \{m_{B}, d_{B} +m_{C}\}$ and we set $\rhs_{\cal C}(A) = (M_B \times B) + (M_C \times C)$, where
\begin{eqnarray*}
M_B &=& \begin{cases} 1 & \text{ if } m_{B} \leq d_{B}+m_{C} \\
              x^{m_{B} - d_{B}-m_{C}} & \text{ if } m_{B} > d_{B}+m_{C} 
            \end{cases}     \\
M_C &=& \begin{cases} 1 & \text{ if } m_{B} \geq d_{B}+m_{C} \\
     x^{d_{B}+m_{C}-m_{B}}   & \text{ if } m_{B} < d_{B}+m_{C}   .
\end{cases}  
\end{eqnarray*}
\end{itemize}
Note that the resulting circuit is powerful skew.
\qed
\end{proof}

\begin{corollary} \label{corollary-SLP->circuit}
The compressed word problem for $\mathbb{Z} \wr \mathbb{Z}$ is $\NC^2$-reducible to 
PIT for powerful skew circuits over the ring $\mathbb{Z}[x]$.
\end{corollary}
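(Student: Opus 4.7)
The plan is to read off the corollary almost immediately from Lemma~\ref{lemma-SLP->circuit}, with only a small amount of bookkeeping to package the result as an $\NC^2$ many-one reduction. Recall that Lemma~\ref{lemma-SLP->circuit} produces, in $\NC^2$, a powerful skew circuit $\mathcal{C}$ over $\mathbb{Z}[x]$ such that
\[
\val(\dA) = 1 \text{ in } \mathbb{Z}\wr\mathbb{Z} \quad \Longleftrightarrow \quad \bigl(\poly(\mathcal{C}) = 0 \ \text{ and } \ \diff(\val(\dA)) = 0\bigr).
\]
So the only thing missing is a way to fold the second condition $\diff(\val(\dA))=0$ into a single PIT instance.

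First, I would compute $\diff(\val(\dA))$ in $\NC^2$. For each variable $A$ of $\dA$, the integer $\diff(\val_{\dA}(A))$ can be read off from an additive skew circuit over $\mathbb{Z}$: terminal rules $\rhs(A) = a$, $\rhs(A) = a^{-1}$ evaluate to $+1$, $-1$ respectively, terminal rules $\rhs(A) = t^{\pm 1}$ evaluate to $0$, and a concatenation rule $\rhs(A) = BC$ becomes an addition gate. By Lemma~\ref{lemma-evaluate-skew}(ii) (applied e.g.\ in the $(\mathbb{Z}\cup\{-\infty\},\max,+)$ or simply an additive circuit over $\mathbb{Z}$), this integer can be computed in $\NC^2$ from $\dA$.

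Next I would build the output circuit of the reduction. In parallel with computing $\mathcal{C}$ via Lemma~\ref{lemma-SLP->circuit}, I compute $d := \diff(\val(\dA))$. If $d \neq 0$, I output a trivial powerful skew circuit computing the constant polynomial $1$ (which is not the zero polynomial); if $d = 0$, I output the circuit $\mathcal{C}$. Both branches are clearly implementable in $\NC^2$, and the choice between them is a single comparison on the binary representation of $d$, which is logspace (hence $\NC^1$) computable. By the equivalence above, the output circuit computes the zero polynomial if and only if $\val(\dA) = 1$ in $\mathbb{Z}\wr\mathbb{Z}$, which establishes the $\NC^2$ many-one reduction.

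There is essentially no hard step here: the content of the corollary lives entirely in Lemma~\ref{lemma-SLP->circuit}. The only thing one must be careful about is that the auxiliary computation of $\diff(\val(\dA))$ and the case distinction on its sign are performed at the correct complexity level, but both are standard skew/additive evaluations that are handled directly by Lemma~\ref{lemma-evaluate-skew}.
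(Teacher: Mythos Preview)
Your proposal is correct and matches the paper's approach: the paper states the corollary with no proof, treating it as an immediate consequence of Lemma~\ref{lemma-SLP->circuit}. Your explicit handling of the side condition $\diff(\val(\dA))=0$ via an $\NC^2$-computable case distinction (outputting a nonzero constant circuit when $\diff\neq 0$) is exactly the kind of bookkeeping the paper leaves implicit, and it is done correctly.
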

In the rest of this section we show that PIT for powerful skew circuits
can be reduced in $\NC^2$ to $\CWP(\mathbb{Z} \wr \mathbb{Z})$.
By Proposition~\ref{prop-make-univariate}, it suffices to consider the univariate case.

\begin{lemma} \label{lemma-add}
Let  $u, v \in \Gamma^*$ be well-formed.
Then $w=uv$ is well-formed too and $p_w(x) = p_u(x) + p_v(x)$. 
\end{lemma}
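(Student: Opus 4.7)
The claim splits naturally into two parts: verifying that $w = uv$ is well-formed, and establishing the polynomial identity $p_w(x) = p_u(x) + p_v(x)$.

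For well-formedness of $w$, the $\diff$ count is immediate: $\diff(w) = \diff(u) + \diff(v) = 0$. For positivity I would observe that any prefix of $w$ ending in $t$ or $t^{-1}$ is either a prefix of $u$, so its $\diff$ is non-negative by positivity of $u$, or else of the form $uv'$ with $v'$ a prefix of $v$ ending in $t$ or $t^{-1}$; then $\diff(uv') = \diff(u) + \diff(v') = \diff(v') \geq 0$ by positivity of $v$.

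The polynomial identity is the more subtle half. A naive induction on $|v|$ directly from the recursive definition of $p_\cdot$ runs into the issue that a prefix of a well-formed word $v$ is only positive, not well-formed, so the induction hypothesis as stated in the lemma is not available for shorter prefixes. I would therefore prove the following stronger statement by induction on $|v|$: for every positive $u \in \Gamma^*$ with $\diff(u) = 0$ and every positive $v \in \Gamma^*$ with $uv$ positive, one has $p_{uv}(x) = p_u(x) + p_v(x)$. Prefixes of positive words are positive, and $uv'$ is a prefix of the positive word $uv$, so the hypotheses are preserved when passing to $v'$.

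The inductive step is a case split on the last symbol $c$ of $v = v'c$. If $c \in \{a, a^{-1}\}$, the recursion gives $p_v = p_{v'}$ and $p_{uv} = p_{uv'}$, so the induction hypothesis applied to $v'$ closes the case. If $c = t^\delta$ for $\delta \in \{1,-1\}$, the recursion gives $p_{uv}(x) = p_{uv'}(x) + \delta x^{\diff(uv')}$ and $p_v(x) = p_{v'}(x) + \delta x^{\diff(v')}$; the key computation is $\diff(uv') = \diff(u) + \diff(v') = \diff(v')$, where the hypothesis $\diff(u) = 0$ enters in an essential way, so the two added monomials coincide and the induction hypothesis $p_{uv'} = p_u + p_{v'}$ yields $p_{uv} = p_u + p_v$. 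The only genuine obstacle is spotting the need to strengthen the inductive statement; once one replaces ``$v$ well-formed'' by ``$v$ positive'' while keeping $\diff(u) = 0$, the argument becomes mechanical.
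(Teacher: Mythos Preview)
Your proof is correct. The paper states this lemma without proof, treating it as an elementary observation, so there is no approach to compare against. Your argument for well-formedness is exactly the expected one, and your induction for the polynomial identity is sound; the observation that the inductive hypothesis must be relaxed from ``$v$ well-formed'' to ``$v$ positive'' is precisely what is needed, and the case analysis is carried out correctly (the crucial point being $\diff(uv') = \diff(u) + \diff(v') = \diff(v')$ since $\diff(u)=0$).

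One small remark: in your strengthened statement the clause ``with $uv$ positive'' is redundant, since a positive $u$ with $\diff(u)=0$ concatenated with a positive $v$ is automatically positive by the same prefix argument you already gave for well-formedness. This does no harm, but you could drop it. An alternative route, which the paper implicitly relies on, is to use the group-theoretic interpretation stated just before the lemma: if $u$ represents $(f_u,0)$ and $v$ represents $(f_v,0)$ in $\mathbb{Z}\wr\mathbb{Z}$, then $uv$ represents $(f_u + f_v, 0)$ by the wreath-product multiplication (the shift by $g_1=0$ is trivial), and since the coefficients of $p_w$ encode the map $f$, the identity $p_{uv}=p_u+p_v$ follows. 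That interpretation itself is proved by the same induction you wrote, so the two approaches are equivalent in content.
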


\begin{lemma} \label{lemma-mult}
Let  $u \in \Gamma^*$ be well-formed, $n,m \in \mathbb{N}$ and let
$w = a^n u^m a^{-n}$.
Then $w$ is well-formed too and $p_w(x) = m \cdot x^n \cdot p_u(x) $. 
\end{lemma}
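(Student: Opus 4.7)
My plan is to isolate a general concatenation identity for $p$ and then apply it mechanically; well-formedness will fall out from a routine prefix analysis.

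First I would handle well-formedness of $w=a^nu^ma^{-n}$. Since $\diff$ is additive under concatenation and $\diff(u)=0$, we get $\diff(w)=n+m\cdot 0-n=0$. For positivity, any prefix of $w$ ending in $t^{\pm 1}$ must lie inside the middle $u^m$ block, so it has the form $a^nu^jv$ with $0\le j<m$ and $v$ a prefix of $u$ ending in $t^{\pm 1}$; its $\diff$-value equals $n+j\cdot 0+\diff(v)\ge n\ge 0$ by positivity of $u$.

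The main technical step is to establish, by induction on $|w_2|$, the identity
$$
p_{w_1w_2}(x)=p_{w_1}(x)+x^{\diff(w_1)}\cdot p_{w_2}(x)
$$
for arbitrary words $w_1,w_2\in\Gamma^*$ (no positivity assumption needed). The base case $w_2=\varepsilon$ is trivial; in the induction step, appending $a^{\pm 1}$ to $w_2$ changes neither side, while appending $t^\delta$ adds $\delta\cdot x^{\diff(w_1w_2)}=x^{\diff(w_1)}\cdot\delta\cdot x^{\diff(w_2)}$ on both sides, matching the inductive definition of $p$.

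With this identity in hand, $p_w(x)$ unpacks directly. Using $p_{a^n}=p_{a^{-n}}=0$ and $\diff(a^n)=n$, two applications give $p_w(x)=x^n\cdot p_{u^m}(x)$, and a short induction on $m$ (using the same identity together with $\diff(u)=0$) yields $p_{u^m}(x)=m\cdot p_u(x)$, hence $p_w(x)=m\cdot x^n\cdot p_u(x)$. There is no real obstacle: Lemma~\ref{lemma-add} is the well-formed special case of the key identity, so the only subtle point is to formulate the identity for arbitrary words, thereby permitting its application to pieces like $a^n$ with nonzero $\diff$.
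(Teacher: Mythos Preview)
The paper states this lemma (and the companion Lemma~\ref{lemma-add}) without proof, treating both as routine; your argument supplies exactly the kind of direct verification the authors presumably had in mind, and it is correct.

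One small point of hygiene: the paper defines $p_w(x)\in\mathbb{Z}[x]$ only for \emph{positive} $w$, so your concatenation identity $p_{w_1w_2}=p_{w_1}+x^{\diff(w_1)}p_{w_2}$ ``for arbitrary words $w_1,w_2\in\Gamma^*$ (no positivity assumption needed)'' is, strictly speaking, an identity in $\mathbb{Z}[x,x^{-1}]$ once the inductive definition of $p$ is extended verbatim to all of $\Gamma^*$. This is harmless in your application --- the factors $a^n$, $u^m$, $u^ma^{-n}$, $a^{-n}$ are all positive and every $\diff$-value you place in an exponent ($\diff(a^n)=n$, $\diff(u^m)=0$, $\diff(u)=0$) is non-negative, so nothing leaves $\mathbb{Z}[x]$ --- but it would be cleaner either to note this explicitly or to state the general identity over $\mathbb{Z}[x,x^{-1}]$ from the outset.
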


\begin{lemma} \label{lemma-circuit->SLP}
From a given powerful skew circuit $\cC$  over the ring $\mathbb{Z}[x]$, one can compute in $\NC^2$
an SLP $\dA$ over the alphabet $\Gamma$ such that the following holds:
\begin{itemize}
\item $\val(\dA)$ is well-formed and
\item $p_{\val(\dA)}(x) = \poly(\cC)$.
\end{itemize}
\end{lemma}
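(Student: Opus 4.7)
\emph{Proof plan.}
The plan is to construct $\dA$ by induction on the structure of the powerful skew circuit $\cC$. For every gate $A$ of $\cC$ I introduce a pair of SLP nonterminals $A_+$ and $A_-$ maintaining the invariant that $\val(A_+)$ and $\val(A_-)$ are well-formed and $p_{\val(A_+)}(x) = \poly_{\cC}(A)$, $p_{\val(A_-)}(x) = -\poly_{\cC}(A)$; the start variable of $\dA$ is then $(A_0)_+$, where $A_0$ is the output gate of $\cC$. The reason for carrying the ``negated twin'' $A_-$ is that Lemma~\ref{lemma-mult} multiplies only by a non-negative integer, so to simulate multiplication by a negative constant of $\cC$ we will need a word whose polynomial is already the negation of the child's polynomial.

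For the case analysis I use the equivalent formulation in which every input gate of $\cC$ is labelled either by an integer constant $c$ (binary-coded) or by a single power $x^N$ (with $N$ binary-coded). For an input gate $c$, set $\val(A_+) = t^{c}$ (using $t$ or $t^{-1}$ according to the sign of $c$) and symmetrically for $A_-$. For an input gate $x^N$, set $\val(A_+) = a^N t a^{-N}$ and $\val(A_-) = a^N t^{-1} a^{-N}$; a direct inspection shows both are well-formed with the required polynomials. For an addition gate $A = B+C$, define $A_\pm \to B_\pm C_\pm$ and invoke Lemma~\ref{lemma-add}. For a multiplication gate $A = B \times C$ in which, without loss of generality, $C$ is the input gate, proceed by cases on $\rhs(C)$: if $\rhs(C) = c \geq 0$ then $A_+ \to B_+^{c}$; if $c < 0$ then $A_+ \to B_-^{|c|}$; if $\rhs(C) = x^N$ then $A_+ \to a^N B_+ a^{-N}$; in each situation $A_-$ is defined analogously, and the invariant follows from Lemma~\ref{lemma-mult} applied to the well-formed word $\val(B_\pm)$ supplied by the induction.

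Since $c$ and $N$ appear in binary, the nonterminals producing the words $t^{|c|}$, $a^N$ and $B_\pm^{|c|}$ must themselves be constructed via subsidiary SLPs of size $O(\log|c|)$ or $O(\log N)$ using iterated squaring. Consequently each gate of $\cC$ contributes only $O(\log|c| + \log N)$ new variables to $\dA$, so $\dA$ is of polynomial total size. The construction is entirely local: every new production depends only on the type of a single gate of $\cC$, the binary digits of its numerical labels, and the names of the nonterminals associated to its immediate inputs. Hence $\dA$ can be produced in logspace, and therefore in $\NC^2$.

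The main obstacle is sign management in the multiplication case, since Lemma~\ref{lemma-mult} offers no mechanism to negate on the fly a word already produced by the induction; once we commit to maintaining the parallel variable $A_-$ alongside $A_+$, verification of both invariants (well-formedness of $\val(A_\pm)$ and correctness of $p_{\val(A_\pm)}$) reduces to direct applications of Lemmas~\ref{lemma-add} and~\ref{lemma-mult} combined with the inductive hypothesis.
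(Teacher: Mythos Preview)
Your proposal is correct and essentially identical to the paper's own proof: the paper also introduces for every gate $A$ a ``negated twin'' nonterminal $A'$ with $p_{\val(A')} = -\poly_{\cC}(A)$, handles addition by concatenation via Lemma~\ref{lemma-add}, and handles a skew product with input monomial $b\cdot x^n$ via Lemma~\ref{lemma-mult} by setting $\rhs_{\dA}(A)=a^n B^{b} a^{-n}$ (reading $B^{-x}$ as $(B')^x$). The only cosmetic difference is that the paper treats an input monomial $b\cdot x^n$ in a single combined rule, whereas you split into the sub-cases ``constant $c$'' and ``pure power $x^N$''; your explicit remark that $t^{|c|}$, $a^N$, and $B_\pm^{|c|}$ must be produced by iterated-squaring sub-SLPs is left implicit in the paper.
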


\begin{proof}
Let $\cC = (V, \rhs_{\cal C}, A_0)$. The set of nonterminals of our SLP $\dA$ contains $V$, 
a disjoint copy $V' = \{V' \mid A \in V\}$ of $V$,
plus some auxiliary nonterminals.
The start nonterminal is $A_0$.
For every nonterminal $A \in V$ we will have
$p_{\val_{\dA}(A)}(x) = \poly_{\cC}(A)$ and for every
nonterminal $A' \in V'$ we will have
$p_{\val_{\dA}(A')}(x) = -\poly_{\cC}(A)$.
We define the right-hand sides of $\dA$ as follows:
\begin{itemize}
\item If $\rhs_{\cal C}(A) = b \cdot x^n$, then we set $\rhs_{\dA}(A) = a^n t^b a^{-n}$ and $\rhs_{\dA}(A') = a^n t^{-b} a^{-n}$.
\item  If $\rhs_{\cal C}(A) = B+C$,  then we set $\rhs_{\dA}(A) = BC$ and $\rhs_{\dA}(A') = B'C'$
The correctness of this step follows from Lemma~\ref{lemma-add}. 
\item If $\rhs_{\cal C}(A) = B \times C$, where w.l.o.g. $C$ is an input gate with 
$\rhs_{\cal C}(C) = b \cdot x^n$, then we set   
$\rhs_{\dA}(A) = a^n B^{b} a^{-n}$ and $\rhs_{\dA}(A') =  a^n B^{-b} a^{-n}$,
where we set $B^{-x} = (B')^x$ for $x \geq 1$.
The correctness of this step
 follows from Lemma~\ref{lemma-mult}. 
\end{itemize}
It follows by a straightforward induction that for every $A \in V$, 
the strings $\val_{\dA}(A)$ and $\val_{\dA}(A')$ are well-formed.
\qed
\end{proof}
From Lemma~\ref{lemma-SLP->circuit} and \ref{lemma-circuit->SLP}
we directly obtain:

\begin{corollary}  \label{equiv1}
The compressed word problem for $\mathbb{Z}\wr\mathbb{Z}$ is equivalent w.r.t. $\NC^2$-reductions to 
PIT for powerful skew circuits over the ring $\mathbb{Z}[x]$.
\end{corollary}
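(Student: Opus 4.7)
The plan is to read off both reductions from the two preceding lemmas, composing them with standard $\NC^2$ pre- and post-processing. Since $\NC^2$ reductions are closed under composition, combining Lemma~\ref{lemma-SLP->circuit} and Lemma~\ref{lemma-circuit->SLP} in the two opposite directions will immediately yield the equivalence. Neither direction has a real obstacle; the work has all been done in the earlier lemmas.

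For the reduction from $\CWP(\mathbb{Z}\wr\mathbb{Z})$ to PIT for powerful skew circuits, I would proceed as follows. Let $\dA$ be the input SLP over $\Gamma = \{a,a^{-1},t,t^{-1}\}$. First, compute $\diff(\val(\dA))$ in $\NC^2$; this is obtained by assigning to the terminals $a,a^{-1}$ the values $+1,-1$ and to $t,t^{-1}$ the value $0$, and then evaluating the induced additive circuit over $\mathbb{Z}$ via Lemma~\ref{lemma-evaluate-skew}. If this value is nonzero, reject outright, since a word representing the identity of $\mathbb{Z}\wr\mathbb{Z}$ must have $\diff=0$. Otherwise, apply Lemma~\ref{lemma-SLP->circuit} to construct in $\NC^2$ a powerful skew circuit $\cC$ over $\mathbb{Z}[x]$ with the property that $\val(\dA)=1$ in $\mathbb{Z}\wr\mathbb{Z}$ if and only if $\poly(\cC)=0$ (using that we have already verified $\diff(\val(\dA))=0$). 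Feeding $\cC$ to the PIT oracle completes the reduction.

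For the converse reduction, let a powerful skew circuit over $\mathbb{Z}[x_1,\ldots,x_k]$ be given. First apply Lemma~\ref{prop-make-univariate} to obtain in $\NC^2$ an equivalent powerful skew circuit $\cC$ over $\mathbb{Z}[x]$ (for the univariate polynomial $\univ(p)$), which is the zero polynomial exactly when the original polynomial is. Then apply Lemma~\ref{lemma-circuit->SLP} to produce, in $\NC^2$, an SLP $\dA$ over $\Gamma$ whose value $\val(\dA)$ is well-formed and satisfies $p_{\val(\dA)}(x) = \poly(\cC)$. Because $\val(\dA)$ is well-formed, $\diff(\val(\dA))=0$, and writing $\val(\dA)=(f,g)$ in $G=\mathbb{Z}\wr\mathbb{Z}$ we have $g=0$ and the coefficient of $x^e$ in $p_{\val(\dA)}(x)$ equals $f(e)$. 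Thus $\val(\dA)=1$ in $G$ if and only if $p_{\val(\dA)}(x)=0$ if and only if $\poly(\cC)=0$. Querying the $\CWP(\mathbb{Z}\wr\mathbb{Z})$ oracle on $\dA$ therefore decides PIT for $\cC$.

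The only delicate point worth double-checking is that the two conditions ``$\poly(\cC)=0$'' and ``$\diff(\val(\dA))=0$'' in the forward direction can be combined within a single $\NC^2$ reduction to PIT; this is handled by the standard trick of producing a modified circuit that is forced to evaluate to a nonzero polynomial whenever $\diff\neq 0$ (or, even more simply, by performing the $\diff$ test as an $\NC^2$ preprocessing step and only invoking the PIT oracle when $\diff=0$). Beyond this, everything is a direct quotation of the two lemmas.
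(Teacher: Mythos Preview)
Your proposal is correct and follows essentially the same approach as the paper: the paper derives the corollary directly from Lemma~\ref{lemma-SLP->circuit} and Lemma~\ref{lemma-circuit->SLP}, invoking Proposition~\ref{prop-make-univariate} for the multivariate-to-univariate reduction in the backward direction, exactly as you do. Your write-up in fact spells out more of the routine details (the $\diff$ preprocessing, the well-formedness consequence) than the paper does, which simply states that the corollary is obtained ``directly'' from the two lemmas.
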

In exactly the same way we can show:
\begin{corollary} \label{equiv2}
The compressed word problem for $\mathbb{Z}_n\wr\mathbb{Z}$ ($n \geq 2$) is equivalent w.r.t. $\NC^2$-reductions to 
PIT for powerful skew circuits over the ring $\mathbb{Z}_n[x]$.
\end{corollary}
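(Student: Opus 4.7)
The plan is to mirror the proof of Corollary~\ref{equiv1} verbatim, replacing the role of $\mathbb{Z}$ inside the wreath product by $\mathbb{Z}_n$ and the role of the polynomial ring $\mathbb{Z}[x]$ by $\mathbb{Z}_n[x]$. Concretely, for $G = \mathbb{Z}_n \wr \mathbb{Z}$ I would fix generators $a, a^{-1}, t, t^{-1}$, where $a$ moves the cursor along the base $\mathbb{Z}$ and $t$ is the generator of the copy of $\mathbb{Z}_n$ at the current cursor position (so $t^n = 1$ holds, but nothing else changes about how $t^{\pm 1}$ acts). For a positive word $w$ over $\Gamma = \{a, a^{-1}, t, t^{-1}\}$ I would define the polynomial $p_w(x) \in \mathbb{Z}_n[x]$ exactly as before: walk through $w$ and, whenever we read $t^\delta$ at cursor displacement $d = \diff(w_{\mathrm{prefix}})$, add $\delta \cdot x^d$ to the current polynomial, now interpreted modulo $n$. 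The key equivalence $w = 1$ in $G$ iff $p_w(x) = 0$ in $\mathbb{Z}_n[x]$ and $\diff(w) = 0$ is immediate, because the mapping component $f : \mathbb{Z} \to \mathbb{Z}_n$ of the group element represented by $w$ is encoded coefficient-wise in $p_w(x)$ modulo $n$.

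The analogue of Lemma~\ref{lemma-SLP->circuit} goes through without change: the pre-processing by iterated squaring to produce $a^k$ and $a^{-k}$, the computation of $d_A$ and $m_A$ in the tropical semiring via Lemma~\ref{lemma-evaluate-skew}, and the final definition of the circuit $\cC$ with right-hand sides $(M_B \times B) + (M_C \times C)$ are purely combinatorial and are oblivious to the ring of coefficients; the only modification is that $\cC$ is now read as a powerful skew circuit over $\mathbb{Z}_n[x]$. Lemmas~\ref{lemma-add} and \ref{lemma-mult} have the same statements and proofs in $\mathbb{Z}_n[x]$, since the quotient homomorphism $\mathbb{Z}[x] \twoheadrightarrow \mathbb{Z}_n[x]$ commutes with addition and integer scaling. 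This gives the $\NC^2$-reduction from $\CWP(\mathbb{Z}_n \wr \mathbb{Z})$ to PIT for powerful skew circuits over $\mathbb{Z}_n[x]$.

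For the converse reduction I would adapt Lemma~\ref{lemma-circuit->SLP}. Given a powerful skew circuit $\cC$ over $\mathbb{Z}_n[x]$, each input gate carries a monomial $b \cdot x^m$ with $b \in \mathbb{Z}_n$ and $m$ binary-encoded. Choose the canonical representative $b \in \{0, 1, \ldots, n-1\}$ and translate this gate into $\rhs_{\dA}(A) = a^m t^b a^{-m}$ and $\rhs_{\dA}(A') = a^m t^{-b} a^{-m}$, where the powers $a^m$, $t^b$ are realized by auxiliary variables via iterated squaring; the integer $b$ is $O(\log n)$ bits, so this fits easily in $\NC^2$. Addition gates and multiplication gates are translated exactly as in the proof of Lemma~\ref{lemma-circuit->SLP}, and the induction that $\val_{\dA}(A)$ is well-formed with $p_{\val_{\dA}(A)}(x) = \val_{\cC}(A)$ in $\mathbb{Z}_n[x]$ proceeds unchanged, invoking the $\mathbb{Z}_n[x]$ versions of Lemmas~\ref{lemma-add} and \ref{lemma-mult}.

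The only place that warrants a second glance is the treatment of the constant $b$ at input gates, since in the $\mathbb{Z}$ case $b$ was an arbitrary binary-encoded integer whereas in the $\mathbb{Z}_n$ case $b$ is determined only modulo $n$; but any lift to an integer in $\{0, \ldots, n-1\}$ works, because $t$ has order $n$ in $G$, so $t^b$ and $t^{b+kn}$ realize the same group element and produce the same polynomial modulo $n$. Consequently, no genuinely new obstacle arises, and the two reductions together yield the claimed $\NC^2$-equivalence.
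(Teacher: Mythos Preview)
Your proposal is correct and matches the paper's approach exactly: the paper itself simply writes ``In exactly the same way we can show'' before stating Corollary~\ref{equiv2}, and the adaptation you spell out (reading the constructions of Lemmas~\ref{lemma-SLP->circuit} and~\ref{lemma-circuit->SLP} over $\mathbb{Z}_n[x]$ instead of $\mathbb{Z}[x]$, with coefficients lifted to $\{0,\ldots,n-1\}$) is precisely what is intended. Your remark about the lift of $b$ being harmless because $t$ has order $n$ is the only point that needed checking, and it is correct.
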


\subsection{Compressed word problems in $\coRNC^2$}

In this section, we apply the results from the last section to find groups for which the compressed
word problem belongs to $\coRNC^2$. Recall from Section~\ref{sec-cwp} that the only known examples of groups with a word
problem in $\NC$ are groups $G$ having a normal subgroup $H$ such that (i) $H$ is f.g. nilpotent and (ii)
$G/H$ is finite solvable. For wreath products we use the following lemma:

\begin{lemma} \label{lemma-direct-product-on-right}
For every $k \geq 1$ and every finitely generated group $G$,  $\CWP(G \wr \mathbb{Z}^k)$ is $\NC^2$-reducible to 
 $\CWP(G \wr \mathbb{Z})$.
 \end{lemma}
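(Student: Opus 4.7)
The plan is to simulate the $\mathbb{Z}^k$-cursor by a single $\mathbb{Z}$-cursor via a base-$d$ encoding, where $d$ is chosen large enough that no two cursor or support positions reached during the evaluation can collide. Given an SLP $\dA$ of size $n$ over $\Sigma_G \cup \Sigma_G^{-1} \cup \{a_1^{\pm 1}, \ldots, a_k^{\pm 1}\}$ (with $\Sigma_G$ a generating set of $G$ and $a_i$ the $i^{\text{th}}$ standard basis vector of $\mathbb{Z}^k$), $\val(\dA)$ has length at most $L := 2^n$, so every coordinate of the cursor and every coordinate in the support of the $G$-valued function remains in $[-L, L]$. Set $d := 2L+1$. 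Then the map
\[
\phi : \{-L,\ldots,L\}^k \to \mathbb{Z}, \qquad \phi(n_1,\ldots,n_k) \,=\, \sum_{i=1}^{k} n_i\, d^{\,i-1},
\]
is injective, because every ``digit'' is strictly smaller in absolute value than $d/2$ so no carry can ever propagate.

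I would then build the output SLP $\dA'$ over $\Sigma_G \cup \Sigma_G^{-1} \cup \{b, b^{-1}\}$ (a generating set of $G \wr \mathbb{Z}$) by replacing each input symbol $a_i^{\pm 1}$ in $\dA$ by a fresh nonterminal that evaluates to $b^{\pm d^{i-1}}$, while leaving every $g \in \Sigma_G \cup \Sigma_G^{-1}$ unchanged. First compute the binary representation of each $d^{\,i-1}$ (of length $O(kn)$); since this is iterated multiplication of $O(k)$ binary numbers of $O(n)$ bits, it lies in $\mathsf{DLOGTIME}$-uniform $\TC^0$ by Proposition~\ref{prop-eberly}. Next, add iterated-squaring nonterminals $B_0, B_1, \ldots$ with $\val(B_j) = b^{2^j}$ together with analogous inverse nonterminals, and for each $i$ concatenate the $B_j$'s selected by the binary digits of $d^{\,i-1}$ to obtain a nonterminal evaluating to $b^{\pm d^{\,i-1}}$. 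Since $k$ is a fixed constant, $\dA'$ has size $O(n)$ and is computable in $\NC^1$.

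For correctness, I would argue by induction on the length of a prefix of $\val(\dA)$ that if such a prefix sends the identity of $G \wr \mathbb{Z}^k$ to $(f, x)$, then the corresponding prefix of $\val(\dA')$ sends the identity of $G \wr \mathbb{Z}$ to $(f', \phi(x))$, where $f'(\phi(y)) = f(y)$ for $y \in \{-L,\ldots,L\}^k$ and $f'$ vanishes elsewhere. Because $\phi$ is injective on this box and $\phi(0) = 0$, one concludes $\val(\dA) = 1$ in $G \wr \mathbb{Z}^k$ if and only if $\val(\dA') = 1$ in $G \wr \mathbb{Z}$. The only mildly delicate point is uniformity: one must convert the binary expansions of the exponents $d^{\,i-1}$ into iterated-squaring SLPs and then splice them into $\dA$ by a parallel (indeed logspace) computation. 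This is standard, and bounded by the $\NC^1$ cost of iterated multiplication, so the whole reduction lives comfortably in $\NC^2$. \qed
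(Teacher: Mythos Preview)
Your proof is correct and follows essentially the same approach as the paper: collapse $\mathbb{Z}^k$ to $\mathbb{Z}$ via a base-$d$ encoding with $d$ large enough to make the map injective on the box of reachable cursor positions, and replace each generator $a_i^{\pm 1}$ by an SLP-defined power $b^{\pm d^{i-1}}$. The only notable difference is that the paper computes $d = 2(|\val(\dA)|+1)$ by evaluating an additive circuit in $\NC^2$ (via Lemma~\ref{lemma-evaluate-skew}), whereas you take the cruder bound $d = 2\cdot 2^n + 1$ directly from the SLP size, which sidesteps that computation entirely; this is a harmless simplification and, as you observe, even pushes the reduction down to $\NC^1$.
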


\begin{proof}
The idea is similar to the proof of Proposition~\ref{prop-make-univariate}.
Let $G$ be generated by the finite set $\Sigma$.
 Fix the generating set $\{a_1, a_2, \ldots, a_k\}$ for $\mathbb{Z}^k$, where every
 $a_i$ generates a $\mathbb{Z}$-copy. Then $G \wr \mathbb{Z}^k$ is generated
 by the set $\Gamma = \Sigma \cup \{a_1, a_2, \ldots, a_k\}$. Let $\dA$ be an SLP over the alphabet
 $\Gamma \cup \Gamma^{-1}$. First, we compute in $\NC^2$ the number $d = 2(|\val(\dA)|+1)$.
Note that for all $a_i, b_i \in \mathbb{Z}$ ($1 \leq i \leq k$) with $|a_i|, |b_i| \leq |\val(\dA)|$ we have:
$(a_1, \ldots, a_k) = (b_1, \ldots, b_k)$ if and only if $\sum_{i=1}^k a_i \cdot d^{i-1} = \sum_{i=1}^k b_i \cdot d^{i-1}$.

From our SLP $\dA$ we construct a new SLP $\dB$ by replacing every occurrence of $a_i$ (resp., $a_i^{-1}$)
in a right-hand side by a new variable that produces $a^{d^{i-1}}$ (resp., $a^{-d^{i-1}}$). This implies the following:
If $(f, (z_1, \ldots, z_k))$ (resp., $(h, z)$) is the group element of  $\CWP(G \wr \mathbb{Z}^k)$ (resp., $\CWP(G \wr \mathbb{Z})$)
represented by $\val(\dA)$ (resp., $\val(\dB)$), then 
$z =  \sum_{i=1}^k z_i \cdot d^{i-1}$ and for all $(x_1, \ldots, x_k) \in \mathbb{Z}^k$,
$f(x_1, \ldots, x_k) = h(x)$, where $x =  \sum_{i=1}^k x_i \cdot d^{i-1}$. It follows that
$\val(\dA) = 1$ in $G \wr \mathbb{Z}^k$ if and only if $\val(\dB) = 1$ in $G \wr \mathbb{Z}$.
\qed
\end{proof}
By Lemma~\ref{prop-wreath-direct-product} and Lemma~\ref{lemma-direct-product-on-right}
the compressed word problem for a group $(G \times H)  \wr \mathbb{Z}^n$ can be reduced
in $\NC^2$ to the compressed word problems for the groups $G \wr \mathbb{Z}$ and $H \wr \mathbb{Z}$ 
Together with Theorem~\ref{thm-PIT-coRNC} and Corollary~\ref{equiv1} and \ref{equiv2} we obtain the following result:

\begin{corollary} \label{coro-coRNC-abelian}
Let $G$ be a finite direct product of copies of $\mathbb{Z}$ and $\mathbb{Z}_p$ for primes $p$.
Then, for every  $n \geq 1$, $\CWP(G \wr \mathbb{Z}^n)$ belong to $\coRNC^2$.
\end{corollary}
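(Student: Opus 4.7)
The plan is to assemble the three pieces already constructed in the paper: the direct-product embedding of Lemma~\ref{prop-wreath-direct-product}, the reduction $\CWP(G \wr \mathbb{Z}^n) \leq_{\NC^2} \CWP(G \wr \mathbb{Z})$ from Lemma~\ref{lemma-direct-product-on-right}, and the $\coRNC^2$-algorithms for PIT over $\mathbb{Z}[x]$ and $\mathbb{F}_p[x]$ from Theorem~\ref{thm-PIT-coRNC} together with the equivalences in Corollaries~\ref{equiv1} and~\ref{equiv2}. Since $G$ and $n$ are fixed (not part of the input), if we write $G = G_1 \times \cdots \times G_k$ with each $G_i$ equal to $\mathbb{Z}$ or to some $\mathbb{Z}_{p_i}$, then both the number $k$ of factors and the primes $p_i$ are constants.

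First I would apply Lemma~\ref{lemma-direct-product-on-right} to $\NC^2$-reduce $\CWP(G \wr \mathbb{Z}^n)$ to $\CWP(G \wr \mathbb{Z})$. Iterating Lemma~\ref{prop-wreath-direct-product} $k-1$ times then produces an embedding
\[
G \wr \mathbb{Z} \;\hookrightarrow\; (G_1 \wr \mathbb{Z}) \times \cdots \times (G_k \wr \mathbb{Z}),
\]
whose $j$-th component is induced by the projection $G \to G_j$. At the level of inputs this means: given an SLP $\dA$ over the generators of $G \wr \mathbb{Z}$, I obtain, for each $j$, an SLP $\dA_j$ whose value in $G_j \wr \mathbb{Z}$ equals the $j$-th coordinate of $\val(\dA)$ simply by replacing every occurrence of a generator of $G_i$ with $i \neq j$ by the empty word. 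Then $\val(\dA) = 1$ in $G \wr \mathbb{Z}$ if and only if $\val(\dA_j) = 1$ in $G_j \wr \mathbb{Z}$ for every $j$, and since $k$ is constant the $k$ instances can be tested in parallel.

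Each of these instances is, by Corollary~\ref{equiv1} or~\ref{equiv2}, $\NC^2$-reducible to PIT for powerful skew circuits over $\mathbb{Z}[x]$ or over $\mathbb{Z}_{p_j}[x]$, and both cases lie in $\coRNC^2$ by Theorem~\ref{thm-PIT-coRNC}; the unary-encoding hypothesis on $p_j$ in that theorem is automatic because $p_j$ is a constant. Composing the $\NC^2$-reductions (and using that $\coRNC^2$ is closed under finite intersection) then yields $\CWP(G \wr \mathbb{Z}^n) \in \coRNC^2$.

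The only mildly delicate point is the $\varepsilon$-substitution on SLPs in the projection step, which has to be handled so that the output is still a well-formed SLP: one first computes in $\NC^2$ the length of each variable after substitution (using an additive evaluation in $(\mathbb{N}, +)$ as in Lemma~\ref{lemma-evaluate-skew}), and then removes from right-hand sides the variables whose substituted value is $\varepsilon$. I do not foresee any substantive obstacle beyond this bookkeeping.
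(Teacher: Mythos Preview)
Your proposal is correct and follows essentially the same approach as the paper, which also combines Lemma~\ref{prop-wreath-direct-product}, Lemma~\ref{lemma-direct-product-on-right}, Theorem~\ref{thm-PIT-coRNC}, and Corollaries~\ref{equiv1} and~\ref{equiv2}. The only cosmetic difference is the order in which the two reduction lemmas are applied (you first collapse $\mathbb{Z}^n$ to $\mathbb{Z}$ and then split the direct product, whereas the paper's one-line argument phrases it as a single combined reduction), and you spell out the $\varepsilon$-substitution bookkeeping that the paper leaves implicit.
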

It is not clear, whether in Corollary~\ref{coro-coRNC-abelian} we can replace $G$ by an
arbitrary finitely generated abelian group. On the other hand,
if we apply Theorem~\ref{thm-ibarra} instead of Theorem~\ref{thm-PIT-coRNC} we obtain:

\begin{corollary} \label{coro-coRP-abelian}
Let $G$ be f.g. abelian and let $H$ be 
f.g. virtually abelian (i.e., $H$ has a f.g. abelian subgroup of finite index).
Then  $\CWP(G \wr H)$ belongs to $\coRP$.
\end{corollary}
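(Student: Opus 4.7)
The plan is to mirror the proof of Corollary~\ref{coro-coRNC-abelian} but replace the concrete factors $\mathbb{Z}, \mathbb{Z}_p$ by arbitrary f.g.\ abelian building blocks via the structure theorem, and handle the ``virtually'' in $H$ by passing to a suitable finite index subgroup via Lemma~\ref{finiteindex}. Since the target complexity is only $\coRP$, polynomial-time (rather than $\NC^2$) reductions suffice throughout, which gives considerable flexibility.

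First I would eliminate the ``virtually''. Let $K\le H$ be a f.g.\ abelian subgroup of finite index $m$. By Lemma~\ref{finiteindex}, $G^m \wr K$ embeds as a subgroup of index $m$ in $G\wr H$. Invoking the well-known polynomial-time equivalence of the compressed word problem for a f.g.\ group and any of its finite index subgroups (see e.g.\ \cite{Loh14}), $\CWP(G\wr H)$ reduces in polynomial time to $\CWP(G^m \wr K)$. Writing $K \cong \mathbb{Z}^k \times F$ with $F$ finite abelian via the structure theorem, a second application of Lemma~\ref{finiteindex} to $\mathbb{Z}^k \le K$ (of index $|F|$) together with the same finite-index reduction shows that $\CWP(G^m\wr K)$ reduces to $\CWP(G' \wr \mathbb{Z}^k)$, where $G' := G^{m\cdot|F|}$ is again f.g.\ abelian.

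Next I would decompose $G'$. By the structure theorem, $G' \cong A_1\times\cdots\times A_s$ where each $A_i$ is isomorphic to $\mathbb{Z}$ or to $\mathbb{Z}_{n_i}$ for some $n_i\ge 2$. Iterating Lemma~\ref{prop-wreath-direct-product} gives an embedding $G'\wr \mathbb{Z}^k \hookrightarrow \prod_{i=1}^s (A_i \wr \mathbb{Z}^k)$, so that $\CWP(G'\wr\mathbb{Z}^k)$ reduces (in parallel, by translating each generator along the componentwise projections) to a conjunction of $\CWP(A_i\wr\mathbb{Z}^k)$ instances. Lemma~\ref{lemma-direct-product-on-right} then reduces each $\CWP(A_i\wr\mathbb{Z}^k)$ to $\CWP(A_i\wr\mathbb{Z})$. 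Finally, Corollary~\ref{equiv1} (if $A_i=\mathbb{Z}$) or Corollary~\ref{equiv2} (if $A_i=\mathbb{Z}_{n_i}$) reduces $\CWP(A_i\wr\mathbb{Z})$ to PIT over $\mathbb{Z}[x]$ or $\mathbb{Z}_{n_i}[x]$, and Theorem~\ref{thm-ibarra} places both problems in $\coRP$.

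The main technical obstacle is the step that trades an ambient group for a finite index subgroup: the ``super-to-sub'' direction of the finite-index CWP equivalence requires rewriting an SLP over generators of the larger group into an SLP over generators of the subgroup, which is done via a Schreier transversal after first determining (in polynomial time, using the action on the finitely many cosets) that the input word actually lies in the subgroup. Everything else is a direct combinatorial unfolding of lemmas already established in the paper.
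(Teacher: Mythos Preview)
Your proposal is correct and follows essentially the same route as the paper's proof: pass to a free abelian finite-index subgroup of $H$ via Lemma~\ref{finiteindex} and the finite-index reduction from \cite[Theorem~4.4]{Loh14}, split the left factor into cyclic pieces via Lemma~\ref{prop-wreath-direct-product}, collapse $\mathbb{Z}^k$ to $\mathbb{Z}$ via Lemma~\ref{lemma-direct-product-on-right}, and finish with Corollaries~\ref{equiv1}, \ref{equiv2} and Theorem~\ref{thm-ibarra}. The only cosmetic difference is that the paper chooses $K$ to be free abelian from the outset (any f.g.\ abelian subgroup of finite index contains a $\mathbb{Z}^k$ of finite index), whereas you perform this in two separate finite-index passes; both arguments are equivalent.
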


\begin{proof}
Let $K \leq H$ be a f.g. abelian subgroup of finite index $m$ in $H$.
Moreover, either $K=1$ or $K \cong \mathbb{Z}^k$ for some $k \geq 1$.
By Lemma~\ref{finiteindex},  $G^m\wr K$ is isomorphic to a subgroup of index $m$ in $G\wr H$.
If the group $A$ is a finite index subgroup of the group $B$, then $\CWP(B)$ is polynomial-time many-one
reducible to $\CWP(A)$ \cite[Theorem~4.4]{Loh14}. 
Hence, it suffices to show that $\CWP(G^m\wr K)$ belongs to $\coRP$. Since $G^m$ is finitely generated abelian,
it suffices to consider $\CWP(\mathbb{Z}_n \wr K)$ ($n \geq 2$) and $\CWP(\mathbb{Z} \wr K)$. The case $K=1$ is clear.
So, assume that $K \cong \mathbb{Z}^k$. By Corollary~\ref{coro-coRNC-abelian}, 
$\CWP(\mathbb{Z} \wr \mathbb{Z}^k)$ belongs to $\coRNC$. Moreover, by Theorem~\ref{thm-ibarra}
and Corollary~\ref{equiv2}, $\CWP(\mathbb{Z}_n \wr \mathbb{Z}^k)$ belongs to $\coRP$.
\qed
\end{proof}
Recall that for a subgroup $H$ of a group $G$, $[H,H]$ denotes the {\em commutator subgroup} of $G$.
It is the subgroup of $G$ generated by all elements $h_1 h_2 h_1^{-1} h_2^{-1}$ with $h_1,h_2 \in H$.
It is well known that if $N$ is a normal subgroup of $G$, then also $[N,N]$ is a normal subgroup of $G$.
Hence, one can consider the quotient group $G/[N,N]$.
The following result of Magnus \cite{Mag39} has many applications in combinatorial group theory.

\begin{theorem}[Magnus embedding theorem]
Let $F_k$ be a free group of rank $k$ and 
let $N$ be a normal subgroup of $F_k$. Then 
$F_k/[N,N] \leq \mathbb{Z}^k \wr F_k/N$.
\end{theorem}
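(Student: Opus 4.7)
The plan is to write down the classical Magnus embedding explicitly and then verify its defining properties. Set $Q = F_k/N$, write $\bar{g} \in Q$ for the image of $g \in F_k$, and let $e_1,\ldots,e_k$ denote the standard basis of $\mathbb{Z}^k$. For $v \in \mathbb{Z}^k$ and $q \in Q$, let $v\delta_q : Q \to \mathbb{Z}^k$ be the function equal to $v$ at $q$ and $0$ elsewhere. I would define $\mu : F_k \to \mathbb{Z}^k \wr Q$ on the free basis by $\mu(x_i) = (e_i\delta_1,\bar{x}_i)$; since $F_k$ is free, this extends uniquely to a homomorphism.

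The first step is to identify $\mu(w)$ explicitly. Using the wreath product multiplication rule $(f_1,g_1)(f_2,g_2) = (a \mapsto f_1(a) + f_2(g_1^{-1}a),\, g_1g_2)$ from the paper (written additively since the base group $\mathbb{Z}^k$ is abelian), an induction on the length of $w$ shows that $\mu(w) = (f_w,\bar{w})$, where the $i$-th coordinate of $f_w$, viewed as an element of the group ring $\mathbb{Z}[Q]$ via $f \mapsto \sum_{q \in Q} f(q)\cdot q$, coincides with the image of the Fox derivative $\partial w/\partial x_i \in \mathbb{Z}[F_k]$ under the natural projection $\pi : \mathbb{Z}[F_k] \to \mathbb{Z}[Q]$. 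The inductive step matches the recursion $\partial(uv)/\partial x_i = \partial u/\partial x_i + u \cdot \partial v/\partial x_i$ against the rule for pointwise product shifted by $g_1^{-1}$.

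For any $w \in N$ one has $\bar{w} = 1$, so $\mu(w)$ lies in the base group $(\mathbb{Z}^k)^{(Q)}$, which is abelian. Consequently $\mu$ kills every commutator $[n_1,n_2]$ with $n_1,n_2 \in N$, giving $[N,N] \leq \ker(\mu)$ and a well-defined induced map $\bar{\mu} : F_k/[N,N] \to \mathbb{Z}^k \wr Q$.

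The main obstacle is showing $\bar{\mu}$ is injective. Suppose $\mu(w) = (0,1)$; then $\bar{w} = 1$ forces $w \in N$, and the identification from the first step says that $\pi(\partial w/\partial x_i) = 0$ for every $i$. At this point I would invoke the classical theorem of Fox from the free differential calculus, which asserts that for $w \in N$ the conditions ``$\pi(\partial w/\partial x_i) = 0$ for all $i$'' and ``$w \in [N,N]$'' are equivalent. This equivalence is proved by identifying $N^{ab} = N/[N,N]$ with the relative augmentation ideal $I_N/I_N \cdot I_F$ viewed as a left $\mathbb{Z}[Q]$-module, and then showing that the images of the differentials $dx_1,\ldots,dx_k$ form a free $\mathbb{Z}[Q]$-basis; the Fox derivatives read off the coordinates of $w$ in this basis. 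This is the technical heart of the argument and where most of the work actually resides. Combining this equivalence with the preceding steps yields $\ker(\mu) = [N,N]$, and hence the desired embedding.
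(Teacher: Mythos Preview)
The paper does not actually prove this statement: it is quoted as a classical result and attributed to Magnus \cite{Mag39}, with no argument given. There is therefore nothing in the paper to compare your proposal against.

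That said, your sketch is correct and is precisely the standard modern proof of the Magnus embedding via Fox's free differential calculus. The definition of $\mu$ on the free generators, the inductive identification of the function component of $\mu(w)$ with $(\pi(\partial w/\partial x_i))_i$ (your check that the wreath-product shift $f_2 \mapsto f_2(\bar u^{-1}\,\cdot)$ corresponds to left multiplication by $\bar u$ in $\mathbb{Z}[Q]$ is the right one), and the observation that $\mu(N)$ lands in the abelian base so that $[N,N]\le\ker\mu$, are all accurate. The only nontrivial step you defer---that for $w\in N$ the vanishing of all $\pi(\partial w/\partial x_i)$ forces $w\in[N,N]$---is exactly Fox's theorem, and your description of its proof (identifying $N/[N,N]$ with the relation module and using freeness of the differentials over $\mathbb{Z}[Q]$) is the standard one. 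Nothing is missing beyond the details of that cited result.
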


\begin{theorem}
Let $F_k$ be a free group of rank $k$ and 
let $N$ be a normal subgroup of $F_k$ such that $F_k/N$ is f.g. virtually abelian.
Then $\CWP(F_k/[N,N])$ belongs to $\coRNC^2$.
\end{theorem}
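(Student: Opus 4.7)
The plan is to combine the Magnus embedding theorem with Corollary~\ref{coro-coRNC-abelian}, using the virtual abelianness of $F_k/N$ to pass to a wreath product to which that corollary applies. The Magnus embedding places $F_k/[N,N]$ as a subgroup of $\mathbb{Z}^k \wr F_k/N$, and since the embedding sends each of the finitely many generators of $F_k/[N,N]$ to a fixed element of $\mathbb{Z}^k \wr F_k/N$, a given SLP over the generators of $F_k/[N,N]$ can be rewritten in $\NC^2$ into an SLP over the generators of $\mathbb{Z}^k \wr F_k/N$ by substituting each terminal with a small SLP for its image. Hence it suffices to show that $\CWP(\mathbb{Z}^k \wr F_k/N) \in \coRNC^2$.

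To reach the form covered by Corollary~\ref{coro-coRNC-abelian}, I would use that $F_k/N$ contains a f.g.\ abelian subgroup $K$ of some finite index $m$, with $K = 1$ (if $F_k/N$ is finite) or $K \cong \mathbb{Z}^r$ for some $r \geq 1$. By Lemma~\ref{finiteindex}, the group $(\mathbb{Z}^k)^m \wr K$ is isomorphic to a subgroup of index $m$ in $\mathbb{Z}^k \wr F_k/N$, and via [Theorem~4.4]{Loh14} (which reduces $\CWP$ of a group to $\CWP$ of any finite-index subgroup) we get that $\CWP(\mathbb{Z}^k \wr F_k/N)$ reduces to $\CWP((\mathbb{Z}^k)^m \wr K)$. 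In the case $K \cong \mathbb{Z}^r$, the base group $(\mathbb{Z}^k)^m$ is a finite direct product of copies of $\mathbb{Z}$, so Corollary~\ref{coro-coRNC-abelian} directly yields $\CWP((\mathbb{Z}^k)^m \wr \mathbb{Z}^r) \in \coRNC^2$. The degenerate case $K = 1$ gives a f.g.\ abelian group, whose compressed word problem lies in $\NC^2$ by Theorem~\ref{thm-CWP-nilpotent}. Chaining the three $\NC^2$-reductions yields the desired bound.

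The main obstacle I anticipate is ensuring that the step invoking [Theorem~4.4]{Loh14} is in fact $\NC^2$-computable rather than merely polynomial time as literally stated there; without this, the chain of reductions would only give $\coRP$, matching Corollary~\ref{coro-coRP-abelian} but falling short of the claimed $\coRNC^2$. The standard construction rewrites an SLP over generators of the ambient group into one over generators of the finite-index subgroup by tracking, after each prefix, the current coset via a fixed finite automaton on coset representatives and substituting the corresponding coset-representative product into right-hand sides. Because the automaton is of constant size and the substitutions are bounded words, this SLP transformation is indeed $\NC^2$-computable, and this is the point that needs to be verified carefully for the proof to go through.
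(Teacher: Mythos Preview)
Your approach matches the paper's: both invoke the Magnus embedding to place $F_k/[N,N]$ inside $\mathbb{Z}^k \wr (F_k/N)$ and then appeal to Corollary~\ref{coro-coRNC-abelian}. The paper's proof, however, is considerably terser --- it simply cites Corollary~\ref{coro-coRNC-abelian} for $\CWP(\mathbb{Z}^k \wr (F_k/N))$ without spelling out the passage from the virtually abelian group $F_k/N$ to the form $\mathbb{Z}^r$ that the corollary literally requires. You fill this gap via Lemma~\ref{finiteindex} and the finite-index reduction of \cite[Theorem~4.4]{Loh14}, and you correctly flag that the latter reduction must be verified to be $\NC^2$-computable (not merely polynomial time) for the $\coRNC^2$ bound to survive the chain; the paper does not address this point at all, so your treatment is in fact more careful than the published proof on exactly the issue you anticipated.
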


\begin{proof}
By the Magnus embedding theorem, the group $F_k/[N,N]$ embeds into the wreath product
$\mathbb{Z}^k \wr (F_k/N)$. For the latter group, the compressed word problem belongs 
to $\coRNC^2$ by Corollary~\ref{coro-coRNC-abelian}. 
\qed
\end{proof}

\section{Open problems}

Our $\coRNC^2$ identity testing algorithm for powerful skew circuits only works for the coefficient rings
$\mathbb{Z}$ and $\mathbb{Z}_p$ with $p$ prime. It is not clear how to extend it to $\mathbb{Z}_n$
with $n$ composite. The Agrawal-Biswas identity testing algorithm also works for  $\mathbb{Z}_n$
with $n$ composite.  But the problem is that the Fich-Tompa algorithm only works for polynomial rings
over $\mathbb{Z}_p$ with $p$ prime. For equality testing for multi-dimensional straight-line programs it 
remains open whether a polynomial time algorithm exists. For the one-dimensional (string) case, 
 a polynomial time algorithm exists. Here, it remains open, whether the equality problem is in $\NC$.

\def\cprime{$'$} \def\cprime{$'$}

\end{document}